\newtheorem{theorem}{Theorem}
\newtheorem{definition}{Definition}
\newtheorem{assumption}{Assumption}
\newtheorem{problem}{Problem}
\theoremstyle{remark}\newtheorem{remark}{Remark}
\theoremstyle{property}\newtheorem{property}{Property}
\title{\LARGE \bf Data-driven Internal Model Control for Output Regulation
}
\author{Wenjie~Liu\textsuperscript{\dag}, Yifei Li\textsuperscript{\dag}, Jian~Sun,~\IEEEmembership{Senior Member,~IEEE}, Gang Wang, Keyou You,\\
 Lihua Xie,~\IEEEmembership{Fellow,~IEEE}, and Jie Chen,~\IEEEmembership{Fellow,~IEEE}
		\thanks{The work was supported in part by the National Natural Science Foundation of China under Grants U23B2059, 62173034, 61925303, and 62088101. (\textsuperscript{\dag}Wenjie Liu and Yifei Li contribute equally to this work.)}
		\thanks{Wenjie Liu, Yifei Li, Jian Sun, Gang Wang, and Jie Chen are with the State Key Lab of Autonomous Intelligent Unmanned Systems, Beijing Institute of Technology, Beijing 100081, China, and also with the Beijing Institute of Technology Chongqing Innovation Center, Chongqing 401120, China (e-mail: liuwenjie@bit.edu.cn; liyifei@bit.edu.cn; sunjian@bit.edu.cn; gangwang@bit.edu.cn; chenjie@bit.edu.cn).

  Keyou You is with the Department of Automation and BNRist, Tsinghua University, Beijing 100084, China (e-mail: youky@tsinghua.edu.cn).

Lihua Xie is with the Centre for Advanced Robotics Technology  Innovation (CARTIN), School of Electrical and Electronic Engineering, Nanyang Technological University, Singapore (e-mail:
elhxie@ntu.edu.sg).
			}
	}
\begin{document}
	
	\maketitle
	

	\begin{abstract}

 Output regulation is a fundamental problem in control theory, extensively studied since the 1970s. Traditionally, research has primarily addressed scenarios where the system model is explicitly known, leaving the problem in the absence of a system model less explored. Leveraging the recent advancements in Willems et al.'s fundamental lemma, data-driven control has emerged as a powerful tool for stabilizing unknown systems. This paper tackles the output regulation problem for unknown single and multi-agent systems (MASs) using noisy data. Previous approaches have attempted to solve data-based output regulation equations (OREs), which are inadequate for achieving zero tracking error with noisy data. To circumvent the need for solving data-based OREs, we propose an internal model-based data-driven controller that reformulates the output regulation problem into a stabilization problem. This method is first applied to linear time-invariant (LTI) systems, demonstrating exact solution capabilities, i.e., zero tracking error, through solving a straightforward data-based linear matrix inequality (LMI). Furthermore, we extend our approach to solve the $k$th-order output regulation problem for nonlinear systems. Extensions to both linear and nonlinear MASs are discussed. Finally, numerical tests validate the effectiveness and correctness of the proposed controllers.
 
	\end{abstract}
	\begin{keywords} Data-driven output regulation, multi-agent systems, exact output regulation, noisy data.
	\end{keywords}

\section{Introduction}\label{sec:intro}
The design of a feedback controller to achieve asymptotic tracking for a class of reference inputs and disturbance rejection for a class of disturbances in uncertain dynamical systems, while ensuring closed-loop system stability, is known as the output regulation problem. In this context, both disturbance and reference signals are generated by a known autonomous system, termed the exosystem. This broad mathematical formulation has been applied to numerous real-world control problems, such as the control of unmanned aerial vehicles \cite{huang2004nonlinear,zhou2023uav}, robot arm manipulation \cite{liaquat2016sampled}, and satellite orbiting \cite{shouman2019output}.

The most straightforward solution to this problem involves constructing a controller using the solutions of a set of Sylvester equations, which is known as output regulation equations (OREs) \cite{wonham1974linear}. However, this approach often suffers from limited robustness against model uncertainties. To address this, the internal model principle was introduced in the 1970s through notable works such as \cite{francis1976internal, davison1976robust}, offering an alternative solution to the output regulation problem without the need for solving OREs. Significant research efforts have since been devoted to both linear and nonlinear systems \cite{huang1994robust}. With advancements in computational technology and information science, the application of large-scale systems has become widespread. Consequently, the focus of the output regulation problem has gradually shifted from linear systems to nonlinear systems as well as single systems to multi-agent systems (MASs).

In the MAS context, the classic output regulation problem is often referred to as the cooperative output regulation problem \cite{su2011cooperative}. The objective remains akin to that of a single (linear or nonlinear) system but requires that the strategy be implemented in a distributed manner, respecting the communication graph among agents. This extension introduces several unique challenges, such as switched network topology \cite{su2012cooperative} and communication constraints \cite{tac2020wanglei}. 
Moreover, the results of the cooperative output regulation problem are pivotal in addressing several other fundamental issues in MASs, including the output synchronization problem \cite{wieland2011internal}, and Nash equilibrium seeking \cite{he2023distributed}.

The aforementioned paradigms for solving the output regulation problem are categorized as \emph{model-based} control, relying on accurate system models or requiring \emph{a priori} system identification steps. In contrast, \emph{direct data-driven} control has recently emerged as a new paradigm for situations where modeling complex systems from first principles is challenging, or identifying large-scale systems necessitates extensive data and computational resources. Inspired by the fundamental lemma \cite{willems2005note}, which asserts that the behavior of a linear time-invariant (LTI) system can be linearly expressed in terms of the range space of raw data matrices, a rapidly growing body of direct data-driven control methods has been developed. These methods encompass various applications and generalizations, including robust control \cite{van2020noisy, bisoffi2022data,liu2024learning,wei2024distributed}, model predictive control (MPC) \cite{Coulson2019data, berberich2019data,Liu2022data}, aperiodic control \cite{Wildhagen2023,Qi2023data,li2024self,chen2025data}, nonlinear control \cite{depersis2023cancel, hu2024enforcing,Tao2024nonlinear}, consensus \cite{chang2025localized}, optimization \cite{eising2023cautious}, and control performance analysis \cite{van2022diss,you2023auto}.
Most recently, a data-enabled policy optimization method has been developed in \cite{zhao2024data}, which effectively improves the optimality of the aforementioned data-driven feedback controllers with online closed-loop data.

To date, diverse data-driven control techniques have been developed to address the output synchronization problem \cite{gao2016adaptive,Jiao2021,li2024poly,lin2024distributed}, which is a crucial application of the cooperative output regulation problem. In data-driven contexts, this problem was initially addressed in ideal scenarios. Specifically, \cite{gao2016adaptive,Jiao2021,lin2024distributed} assumes measurable and perfectly known process noise when solving the OREs. Real-world systems, however, often contain pervasive and unmeasurable process noise, compromising the feasibility of OREs. To address this issue, the work \cite{li2024poly} proposed a data-driven polytopic approach providing approximate ORE solutions and stabilizing control gains, achieving near-optimal synchronization. Preliminary data-driven results on the robust output regulation of LTI systems were established in \cite{Zhu2024}, leveraging data informativity theory \cite{vanwaarde2023informativity}. Despite these advances, several research areas remain open, summarized as follows: a1) achieving exact output regulation for unknown LTI systems using noisy data; a2) establishing a direct data-driven output regulation synthesis for unknown nonlinear systems; and, a3) addressing the cooperative output regulation problem for unknown MASs, while ensuring closed-loop stability.

This paper aims to address these open problems step by step. For the first question, previous methods involved reconstructing data-based OREs and devising a controller based on the resulting solution. To overcome the drawback of noise affecting the exact solution to OREs, we adopt the internal model principle, which surprisingly guarantees zero tracking error without solving OREs. Specifically, we design an internal model-based controller using the solution of a data-based linear matrix inequality (LMI), ensuring zero tracking error under noisy data. Building on this idea, we further show that our method solves the $k$th-order output regulation problem for a class of nonlinear systems by utilizing a $k$-fold internal model. Alternative data-driven design methods are discussed, highlighting the generality of our approach. Furthermore, we extend the proposed method to distributed settings for unknown  linear and nonlinear MASs, ensuring successful tracking of an exosystem by all agents.

In summary, the main contribution of this work is threefold:
\begin{itemize}
\item [c1)] Exact output regulation of unknown LTI systems is realized by solving a data-based LMI from noisy input-state data;
\item [c2)] A data-driven controller is designed for unknown nonlinear systems, solving the $k$th-order nonlinear output regulation problem; and, 
\item [c3)] Distributed data-driven control protocols are developed for linear and nonlinear cooperative output regulation problems, with stability guarantees.
\end{itemize}

The paper is organized as follows: Section \ref{sec:pre} provides notation and basic data-driven preliminaries. Section \ref{sec:single} addresses linear output regulation and the $k$th-order nonlinear output regulation problem for unknown systems, and Section \ref{sec:mas}  extends the proposed method to MASs. Section \ref{sec:conclusion} concludes the paper.

\section{Preliminaries}\label{sec:pre}
In this section, we set up the notation and revisit the main result in \cite{bisoffi2022data}, which will be useful throughout the paper.

\subsection{Notation}\label{sec:intro:notation}

We denote the set of real numbers, non-negative integers, and positive integers by $\mathbb{R}$, $\mathbb{N}$, and $\mathbb{N}_{+}$, respectively. The sets of $n$-dimensional real vectors and $n\times m$ real matrices are represented by $\mathbb {R}^n$ and $\mathbb {R}^{n\times m}$, respectively. Additionally, $M \succ (\succeq)~0$ implies that $M$ is positive (semi-)definite, and $M \prec (\preceq)~0$ means that $M$ is negative (semi-)definite. The spectral norm of a matrix $M$ is denoted by $\Vert M\Vert$, and the Euclidean norm of a vector $x \in \mathbb{R}^{n_x}$ is denoted by $\Vert x\Vert$. For a series of column vectors $x_1,\cdots, x_{n_x}$, let ${\rm col}(x_1,\cdots, x_{n_x})$ represent a column vector formed by stacking them, i.e., ${\rm col}(x_1,\cdots, x_{n_x}) = [x_1^\top,\cdots,x_{n_x}^\top]^\top$.

For a constant $\ell \in \mathbb{N}$, let $x^{[\ell]}$ denote the vector containing all monomials of $x$ of degree $\ell$:
\begin{align}
	x^{[\ell]} &= [x_1^\ell,~x_1^{\ell - 1}x_2,~\cdots,~x_1^{\ell - 1}x_{n_x},x_1^{\ell - 2}x_2^2, \nonumber \\
 &~~~\cdots,~x_1^{\ell -2}x_2 x_3,~\cdots,~ x_1^{\ell -2} x_2 x_{n_x},~\cdots,~ x_{n_x}^{\ell}]^\top.
\end{align}
Moreover, for $k = 2, 3, \cdots$, let
\begin{equation}
	x^{(0)} = 1,~~ x^{(1)} = x,~~ x^{(k)} = \underbrace{x\otimes x \otimes \cdots \otimes x}_{k-{\rm factor}}
\end{equation}
where $\otimes$ is the Kronecker product. For matrices $A$, $B$, and $C$ with compatible dimensions, we abbreviate $ABC(AB)^\prime$ to $AB \cdot C(\star)^\prime$. The expression ${\rm diag}\{\cdot\}$ (${\rm blockdiag}\{\cdot\}$) represents a (block) diagonal matrix holding the given elements (matrices) on the main diagonal. The symbol $I_N$ denotes the identity matrix of dimension $N$, and $1_N$ is an $N$-dimensional column vector with all ones.

\subsection{A Robust Data-driven Control Method}\label{sec:pre:ddcontrol}

In the context of data-driven control, it is often assumed that noisy data can be collected beforehand, resulting in a multitude of systems consistent with these data. Therefore, rather than designing a stabilizing controller for a single system as in the model-based approach, the task typically involves designing a stabilizing controller for a set of systems. This is accomplished using robust control methods, such as the S-lemma in \cite{van2020noisy} and Petersen's lemma in \cite{bisoffi2022data}. These methods, known as robust data-driven control methods, are crucial for deriving our main results. Given that different robust data-driven control methods have minimal impact on our design, we briefly review one such method from \cite{bisoffi2022data}.

Consider a continuous-time linear time-invariant (LTI) system
\begin{equation}\label{eq:sys:peter}
\dot{x} = A x + B u
\end{equation}
where $x \in \mathbb{R}^{n_x}$ is the state and $u \in \mathbb{R}^{n_u}$ is the input.
In the data-driven setting, matrices $A$ and $B$ are assumed unknown.
By performing an offline experiment with a  $T$-long input sequence $\{u(\tau)\}_{\tau = 0}^{T - 1}$ to system \eqref{eq:sys:peter}, we can collect a sequence of states $\{x(\tau)\}_{\tau = 0}^{T - 1}$.
For any $\tau \in [t_k, t_{k+1})$, we approximate the state derivative as $\dot{x}(\tau) := 
\frac{x(t_{k + 1}) - x(t_k)}{t_{k + 1} - t_k}$, which satisfies
\begin{equation*}
    \dot{x}(\tau) = A x(\tau) + B u(\tau) + d(\tau),~~\tau \in \{0, \cdots, T - 1\}.
\end{equation*}
where $d\in \mathbb{R}^{n_x}$ 
represents the unknown disturbance, including approximation errors and other noise.

Define the following data matrices
\begin{subequations}\label{eq:data:peter}
\begin{align}
    &U_- = [u(0)~u(1)~\cdots~u(T - 1)]\\
    &X_- = [x(0)~x(1)~\cdots~x(T - 1)]\\
    &X_+ = [\dot{x}(0)~\dot{x}(1)~\cdots~\dot{x}(T - 1)]\\
    &D_- = [d(0)~d(1)~\cdots~d(T - 1)].
\end{align}
\end{subequations}
Due to the uncertainty of $D_-$, a set of system matrices consistent with the data $X_+$, $X_-$, and $U_-$ exists, making it impossible to recover the true matrices $A$ and $B$.
Hence, stabilizing the true system 
$[A~B]$ reduces to stabilizing the set of systems defined by the data. To proceed, we impose the following standard assumptions.
\begin{assumption}\label{as:peter:rank}
The matrix $\left[\begin{smallmatrix} U_-\\
X_- \end{smallmatrix}\right]$ has full row rank.
\end{assumption}
\begin{assumption}\label{as:peter:noise}
The disturbance matrix 
$D_-$
  is bounded; that is, there exists a matrix
    $\Delta$ such that $D_- \in \mathcal{D}$, where  
\begin{equation*}
    \mathcal{D} := \{ D \in \mathbb{R}^{n_x \times T} : DD^\top \preceq \Delta \Delta^\top\}.
\end{equation*}
\end{assumption} 
\begin{remark}
   Assumptions \ref{as:peter:rank} and \ref{as:peter:noise} are very common in the field of data-driven control, as seen in \cite{depersis2021lowcomplexity,bisoffi2022data}. 
   Specifically, Assumption \ref{as:peter:rank} is related to the
notion of persistency of excitation \cite{willems2005note}, implying that the data contains complete information about the  system's dynamics.
Assumption \ref{as:peter:noise} is general enough  to capture several types of disturbances, such as constant, sinusoidal, and truncated Gaussian noise, to name a few.
\end{remark}

Building on these assumptions, the set of all possible matrices $\bar{A}$ and $\bar{B}$ obeying $X_+ = \bar{A} X_- + \bar{B} U_- + D$ with $D \in \mathcal{D}$ is given by
\begin{align}
    \mathcal{C} &:= \{\Phi^\top = [\bar{A}~\bar{B}] : X_+ = \Phi^\top  \left[\begin{matrix}
        X_-\\
        U_-
    \end{matrix}\right] + D,~ D \in \mathcal{D}\}  \nonumber\\
    & = \{\Phi^\top = [\bar{A}~\bar{B}]: \Sigma + \Upsilon^\top \Phi + \Phi^\top \Upsilon + \Phi^\top \Psi \Phi \preceq 0\} \label{eq:peter:setC}
\end{align}
where 
\begin{align}\label{eq:elip:peter:ABC}
	\begin{smallmatrix}\Psi:= \left[\begin{smallmatrix}
		X_-\\
		 U_-
	\end{smallmatrix}\right]\left[\begin{smallmatrix}
	X_-\\
	U_-
	\end{smallmatrix}\right]^\top,~~\Upsilon := - \left[\begin{smallmatrix}
	X_-\\
	U_-
 \end{smallmatrix}\right]X_+^\top,~~\Sigma := X_+ X_+^\top \!-\! \Delta \Delta^\top.\end{smallmatrix}
\end{align}

Using a state feedback controller 
$u = Kx$ for system \eqref{eq:sys:peter}, the following theorem from \cite{bisoffi2022data} provides a stabilizing controller.
\begin{theorem}\label{thm:peter}
    For data matrices $U_-$, $X_-$, $X_+$ in \eqref{eq:data:peter} satisfying Assumptions \ref{as:peter:rank} and \ref{as:peter:noise} and $\Psi$, $\Upsilon$, $\Sigma$ in \eqref{eq:elip:peter:ABC}, the feasibility of the following stabilization problem
    \begin{align}
        &{\rm find}~ K, P = P^\top \succ 0\nonumber \\
        &{\rm s.t.}~ (\bar{A} + \bar{B}K) P + P (\bar{A} + \bar{B}K)^\top \prec 0 ~{\rm for~all}~[\bar{A}~\bar{B}] \in 
        \mathcal{C} \nonumber
    \end{align}
    is equivalent to the feasibility of LMI
    \begin{subequations}\label{eq:sdp:peter}
\begin{align}
        &{\rm find}~Y,P = P^\top \succ 0\\
        &{\rm s.t.}~\left[\begin{matrix}
	- \Sigma &  \Upsilon^\top - \left[\begin{matrix}
		P\\
		Y
	\end{matrix}\right]^\top\\
	\star &  -\Psi
\end{matrix}\right] \prec 0.
    \end{align}
\end{subequations}
    If \eqref{eq:sdp:peter} is solvable, the feedback controller $u = Kx$ with $K = YP^{-1}$ stabilizes system \eqref{eq:sys:peter}.
\end{theorem}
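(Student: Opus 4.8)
The plan is to recognize that both the quadratic matrix inequality defining $\mathcal C$ in \eqref{eq:peter:setC} and the robust Lyapunov inequality are quadratic forms in the lifted variable $\left[\begin{smallmatrix} I \\ \Phi \end{smallmatrix}\right]$, and then to dualize the ``for all $[\bar A~\bar B]\in\mathcal C$'' quantifier through Petersen's lemma \cite{bisoffi2022data} (equivalently, a matrix S-procedure). First I would rewrite the membership condition: since $\left[\begin{smallmatrix} I & \Phi^\top \end{smallmatrix}\right]\left[\begin{smallmatrix}\Sigma & \Upsilon^\top \\ \Upsilon & \Psi\end{smallmatrix}\right]\left[\begin{smallmatrix} I \\ \Phi \end{smallmatrix}\right]=\Sigma+\Upsilon^\top\Phi+\Phi^\top\Upsilon+\Phi^\top\Psi\Phi$, the set $\mathcal C$ is exactly the set of $\Phi$ for which this form is $\preceq 0$, with the symmetric data matrix $N:=\left[\begin{smallmatrix}\Sigma & \Upsilon^\top \\ \Upsilon & \Psi\end{smallmatrix}\right]$.

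Next, using $\bar A+\bar BK=\Phi^\top\left[\begin{smallmatrix} I \\ K \end{smallmatrix}\right]$ together with the change of variables $Y=KP$ and $M:=\left[\begin{smallmatrix} P \\ Y \end{smallmatrix}\right]$, the Lyapunov expression becomes $(\bar A+\bar BK)P+P(\bar A+\bar BK)^\top=\Phi^\top M+M^\top\Phi=\left[\begin{smallmatrix} I & \Phi^\top\end{smallmatrix}\right]\left[\begin{smallmatrix} 0 & M^\top \\ M & 0\end{smallmatrix}\right]\left[\begin{smallmatrix} I \\ \Phi\end{smallmatrix}\right]$. The robust stabilization problem thus reads: find $P\succ 0$, $Y$ such that this second quadratic form is $\prec 0$ for every $\Phi$ making the first one $\preceq 0$. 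The sufficiency direction is then an elementary S-procedure argument: if $\left[\begin{smallmatrix} 0 & M^\top \\ M & 0\end{smallmatrix}\right]-\alpha N\prec 0$ for some $\alpha\ge 0$, then pre/post-multiplying by $\left[\begin{smallmatrix} I \\ \Phi\end{smallmatrix}\right]$ (which has full column rank) and using $\left[\begin{smallmatrix} I & \Phi^\top\end{smallmatrix}\right]N\left[\begin{smallmatrix} I \\ \Phi\end{smallmatrix}\right]\preceq 0$ immediately yields the Lyapunov inequality for all $[\bar A~\bar B]\in\mathcal C$.

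The core step is the reverse (losslessness) direction, which I expect to be the main obstacle. Here I would invoke Petersen's lemma after parametrizing $\mathcal C$ by a norm-bounded block: because $\left[\begin{smallmatrix}X_-\\U_-\end{smallmatrix}\right]$ has full row rank (Assumption \ref{as:peter:rank}), $\Psi\succ 0$ and every $\Phi\in\mathcal C$ admits the representation $\Phi=\Phi_c-\Psi^{-1}\left[\begin{smallmatrix}X_-\\U_-\end{smallmatrix}\right]F^\top\bar\Delta$ with center $\Phi_c=-\Psi^{-1}\Upsilon$, $\bar\Delta=(\Delta\Delta^\top)^{1/2}$, and $FF^\top\preceq I$. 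Substituting makes $\Phi^\top M+M^\top\Phi$ affine in the free block $F$, which is precisely the structure Petersen's lemma handles \emph{without conservatism}, so that feasibility of the semi-infinite condition is equivalent to existence of a multiplier $\alpha\ge 0$ with $\left[\begin{smallmatrix} 0 & M^\top \\ M & 0\end{smallmatrix}\right]-\alpha N\prec 0$. It is exactly at this point that the norm bound $FF^\top\preceq I$ and $\Psi\succ 0$ must be used carefully, since losslessness is what upgrades the LMI from sufficient to equivalent.

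Finally I would clean up to match \eqref{eq:sdp:peter}. Since $\Psi\succ 0$, the $(2,2)$ block forces $\alpha>0$, so I can rescale $P\mapsto P/\alpha$ and $Y\mapsto Y/\alpha$ to eliminate the multiplier, and then apply the congruence $\mathrm{diag}(I,-I)$ to flip the sign of the off-diagonal block; this transforms $\left[\begin{smallmatrix} 0 & M^\top \\ M & 0\end{smallmatrix}\right]-\alpha N\prec 0$ into $\left[\begin{smallmatrix} -\Sigma & \Upsilon^\top-\left[\begin{smallmatrix}P\\Y\end{smallmatrix}\right]^\top \\ \star & -\Psi\end{smallmatrix}\right]\prec 0$, which is \eqref{eq:sdp:peter}. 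The invertibility of $P$ (from $P\succ 0$) then lets me recover $K=YP^{-1}$, and because the true pair $[A~B]$ lies in $\mathcal C$ (the recorded data were generated with $D_-\in\mathcal D$), this same $K$ stabilizes the actual system \eqref{eq:sys:peter}.
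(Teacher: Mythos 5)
Your overall plan is, in architecture, exactly the proof of the result being quoted: note that the paper does not prove Theorem \ref{thm:peter} itself but recalls it from \cite{bisoffi2022data}, and the proof there proceeds precisely along your lines --- lift both the consistency QMI \eqref{eq:peter:setC} and the Lyapunov inequality to quadratic forms in $\left[\begin{smallmatrix} I \\ \Phi \end{smallmatrix}\right]$, get sufficiency by multiplying the multiplier inequality on both sides by the full-column-rank matrix $\left[\begin{smallmatrix} I \\ \Phi \end{smallmatrix}\right]$, get losslessness from Petersen's lemma applied to a norm-bounded reparametrization of $\mathcal{C}$, absorb the multiplier by homogeneity, and finish with the $\mathrm{diag}(I,-I)$ congruence. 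Your sufficiency step, the observation that the $(2,2)$ block $-\alpha\Psi\prec 0$ forces $\alpha>0$ under Assumption \ref{as:peter:rank}, the rescaling $P\mapsto P/\alpha$, $Y\mapsto Y/\alpha$ (legitimate since the block inequality is jointly homogeneous in $(P,Y,\alpha)$ and $K=YP^{-1}$ is scale-invariant), and the final congruence are all correct.

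The gap sits exactly in the step you flag as the core: your norm-bounded parametrization of $\mathcal{C}$ uses the wrong radius. With $Z:=\left[\begin{smallmatrix} X_- \\ U_- \end{smallmatrix}\right]$ and $\Phi_c=-\Psi^{-1}\Upsilon$, completing the square in the QMI gives $\mathcal{C}=\{\Phi:(\Phi-\Phi_c)^\top\Psi(\Phi-\Phi_c)\preceq\Gamma\}$ with $\Gamma:=\Upsilon^\top\Psi^{-1}\Upsilon-\Sigma=\Delta\Delta^\top-X_+\left(I-Z^\top\Psi^{-1}Z\right)X_+^\top$, \emph{not} $\Delta\Delta^\top$. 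Your set $\{\Phi_c-\Psi^{-1}ZF^\top\bar\Delta:\,FF^\top\preceq I\}$ equals $\{\Phi:(\Phi-\Phi_c)^\top\Psi(\Phi-\Phi_c)\preceq\Delta\Delta^\top\}$, and since $X_+(I-Z^\top\Psi^{-1}Z)X_+^\top=D_-(I-Z^\top\Psi^{-1}Z)D_-^\top\succeq 0$ is generically nonzero (whenever $T>n_x+n_u$ and the realized noise has a component outside the row space of $Z$), your parametrized set is a \emph{strict superset} of $\mathcal{C}$. It is true that every $\Phi\in\mathcal{C}$ admits your representation, but that is the wrong inclusion for losslessness: the ``only if'' side of Petersen's lemma requires the Lyapunov inequality to hold for \emph{every} $F$ with $FF^\top\preceq I$, i.e., on the whole superset, whereas the hypothesis of the necessity direction only provides it on $\mathcal{C}$. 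As written, your route would at best establish equivalence with the strictly stronger LMI in which $\Sigma$ is replaced by $\Upsilon^\top\Psi^{-1}\Upsilon-\Delta\Delta^\top\preceq\Sigma$, not with \eqref{eq:sdp:peter}. The fix is local: parametrize $\mathcal{C}=\{\Phi_c+\Psi^{-1/2}\Xi\,\Gamma^{1/2}:\Xi^\top\Xi\preceq I\}$ (exact under Assumption \ref{as:peter:rank}, with $\Gamma\succeq 0$ because $\mathcal{C}$ contains the true system and is hence nonempty; the singular-$\Gamma$ case needs a separate remark). Petersen's lemma then gives existence of $\epsilon>0$ with $\Phi_c^\top M+M^\top\Phi_c+\epsilon M^\top\Psi^{-1}M+\epsilon^{-1}\Gamma\prec 0$, which a Schur complement shows to be exactly your multiplier inequality $\left[\begin{smallmatrix} 0 & M^\top \\ M & 0\end{smallmatrix}\right]-\alpha N\prec 0$ with $\alpha=\epsilon^{-1}$; the remainder of your argument then goes through unchanged.
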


\begin{remark}
Instead of the energy bound in Assumption \ref{as:peter:noise}, other noise bounds can be considered, such as the instantaneous bound
    $d(\tau)d(\tau)^\top \le \delta_{\tau}^2$ in \cite{xie2024data}, or polytopic bounds in \cite{li2024poly}.
    These methods can reduce the size of the set $\mathcal{C}$ in \eqref{eq:peter:setC}, and consequently improve the feasibility of LMI \eqref{eq:sdp:peter}.
\end{remark}

\section{Data-driven Output Regulation}\label{sec:single}

This section addresses the data-driven output regulation problem for both linear and nonlinear systems. We begin by introducing the model-based setup, which serves as a foundation for the data-driven design.

\subsection{Linear Output Regulation}
\subsubsection{Output regulation equations}\label{sec:single:linear}
Consider a continuous LTI system described by
\begin{subequations}\label{eq:sys}
	\begin{align}
		\dot{x} &= A x + B u + E_w w\\
		y &= C x + F_w w
	\end{align}
\end{subequations}
where $x\in \mathbb{R}^{n_x}$ is the state, $u\in \mathbb{R}^{n_u}$ is the control input, $y \in \mathbb{R}^{n_y}$  is the output, and $w \in \mathbb{R}^{n_w}$ is the disturbance.
Following \cite[Chapter 1]{huang2004nonlinear}, the objective of output regulation is to design a controller such that the output  $y(t)$ asymptotically tracks a given reference input $r(t)$, i.e.,
\begin{equation}\label{eq:e}
	\lim_{t \rightarrow \infty} e(t) = \lim_{t \rightarrow \infty} (y(t) - r(t)) = 0.
\end{equation}
Assume that both the disturbance and reference input are generated by linear autonomous differential equations
\begin{subequations}\label{eq:sys:wr}
    \begin{align}
	\dot{w} &= S_{w} w,~~w(0) = w_0\\
	\dot{r} &= S_{r} r,~~r(0) = r_0
\end{align}
\end{subequations}
where $S_{w} \in \mathbb{R}^{n_w \times n_w}$ and $S_{r} \in \mathbb{R}^{n_r \times n_r}$ are assumed known, and $w_0$, $r_0$ are arbitrary initial states. Such formulations (e.g., \eqref{eq:sys:wr}) are general enough, encompassing a broad class of functions, including step, ramp, and sinusoidal functions of various magnitudes and phases.

Due to the similar dynamics of the disturbance and reference signals, they can be considered collectively as the exosignal. Define 
 $n_v := n_w + n_r$ and 
\begin{equation*}
	v = \left[
	\begin{matrix}
		r\\
		w
	\end{matrix}\right] \in \mathbb{R}^{n_v},~~S= \left[\begin{matrix}
	S_{r} & 0\\
	0 & S_{w}
	\end{matrix}\right].
\end{equation*}
The reference inputs and disturbances can be rewritten compactly as follows
\begin{equation}\label{eq:sys:exo}
	\dot{v} = S v,~~v(0) =v_0 = \left[\begin{matrix}
		r_0\\
		w_0
	\end{matrix}\right]
\end{equation}
which we refer to as the exosystem with the exosignal 
 $v$.
The following assumption is imposed.
\begin{assumption}\label{as:exo:lin}
	The matrix $S$ is known and has no eigenvalues with negative real parts.
\end{assumption}
\begin{remark}
    Assumption \ref{as:exo:lin} ensures that the exosignal does not vanish as $t \rightarrow \infty$. Otherwise, the analysis becomes trivial; see \cite[Remark 1.3]{huang2004nonlinear} for details.
\end{remark}

Based on the exosystem above, the system in \eqref{eq:sys} becomes
\begin{subequations}\label{eq:sys:compact}
	\begin{align}
		\dot{x} & = A x + B u + E v \label{eq:sys:compact:x}\\
		e & = C x + F v\label{eq:sys:compact:e}
	\end{align}
\end{subequations}
where $E  = [0~E_w]$ and $F = [-I~F_w]$.
Thus, the linear output regulation problem is formally presented as follows.
\begin{problem}[Linear output regulation]\label{pro:output}
	Design a control law $u$ such that the closed-loop system \eqref{eq:sys:compact:x} with $v(t) = 0$ for all $t\ge 0$ is exponentially stable.
	Additionally, for any initial states $x(0)$ and $v(0)$, the trajectory of \eqref{eq:sys:compact} satisfies \eqref{eq:e}.
\end{problem}
Traditionally, when $v$ is measurable, Problem \ref{pro:output} can be solved using a controller of the form
\begin{equation}\label{eq:ctrl:static}
	u = K_x x + K_v v.
\end{equation}
Let $\bar{x} := x - \Pi v$ for some matrix $\Pi \in \mathbb{R}^{n_x \times n_v}$.
It follows from \eqref{eq:sys:compact} that
\begin{align*}
	\dot{\bar{x}} & =(A + BK_x) x + B K_v v  + E v - \Pi S v\\
	 & = (A + BK_x) \bar{x} + ((A + BK_x)\Pi + BK_v + E - \Pi S)v\\
	 e & = C\bar{x} + (C\Pi + F) v. 
\end{align*}
Thus, if matrices $\Pi$, $K_x$ and $K_v$ are designed such that $A + BK_x$ is Hurwitz stable and the following equations are satisfied
\begin{align*}
	\Pi S & = (A + BK_x)\Pi + BK_v + E \\
	0 & = C\Pi + F 
\end{align*}
then $e(t) \rightarrow 0$ as $t \rightarrow \infty$.
Solving Problem \ref{pro:output} is thus equivalent to finding a controller $u = K_x x + K_v v$ where $K_x$ ensures $A + B K_x$ is Hurwitz stable and $K_v = \Gamma - K_x \Pi$ with $(\Pi, \Gamma)$ obeying
\begin{subequations}\label{eq:outreg2}
    \begin{align}
	\Pi S & = A \Pi + B \Gamma + E \\
	0 & = C\Pi + F.
\end{align}
\end{subequations}
These are known as the output regulation equations (OREs), essential for solving Problem \ref{pro:output}.

In the absence of system models, upon collecting a sequence of noisy input-output-state data, a set of systems consistent with the data exists, i.e., 
$(\bar{A}, \bar{B}, \bar{C}, \bar{E},\bar{F}) \in {\mathcal{C}}$.
Therefore, instead of finding a solution $(\Pi, \Gamma)$ of \eqref{eq:outreg2} for the unique actual system $(A,B,C,E,F)$, we seek a solution $(\Pi, \Gamma)$ such that for all $(\bar{A}, \bar{B}, \bar{C}, \bar{E},\bar{F}) \in {\mathcal{C}}$, the OREs in \eqref{eq:outreg2} hold.
Specifically, we seek  a solution $(\Pi, \Gamma)$ to 
    \begin{align*}
        \left[ \begin{matrix}
            (S^\top \otimes I) - (I \otimes \bar{A}) & (I \otimes -\bar{B})\\
        I \otimes \bar{C} & 0
        \end{matrix}
        \right] 
        \left[\begin{matrix}
            {\rm vec}(\Pi)\\
            {\rm vec}(\Gamma)
        \end{matrix}\right] = \left[\begin{matrix}
            {\rm vec}(\bar{E})\\
            -{\rm vec}(\bar{F})
        \end{matrix}\right]
    \end{align*}
which is infeasible due to the infinitely many systems 
$(\bar{A}, \bar{B}, \bar{C}, \bar{E},\bar{F})$ contained in ${\mathcal{C}}$ compared to the finite number of variables; see \cite[Lemma 1.21]{huang2004nonlinear} for details. A possible solution is to find an approximate solution  $(\bar{\Pi},\bar{\Gamma})$ that minimizes the error of the equations \eqref{eq:outreg2} for all possible matrices $(\bar{A}, \bar{B}, \bar{C}, \bar{E},\bar{F})$, as discussed in \cite{li2024poly}. 
In real-world applications, instead of achieving $\lim_{t \rightarrow \infty}e(t) = 0$, one may only expect $\lim_{t \rightarrow \infty}\|e(t)\| \le \delta$, where $\delta$ is a small constant induced by differences between the actual $(\Pi, \Gamma)$ and the approximated solution  $(\bar{\Pi},\bar{\Gamma})$, as well as the noise magnitude. 

This challenge is similar to recent developments in data-driven LQR and LQG control \cite{depersis2021lowcomplexity,liu2024learning}, where solving algebraic Riccati equations (AREs) is necessary to obtain the LQR and LQG gains. The equivalent LMI formulation of the AREs transforms the problem of solving equations for a set of systems into solving inequalities, providing additional freedom and overcoming the difficulty. A similar approach is expected here to achieve $\lim_{t \rightarrow \infty}e(t) = 0$ by seeking inequalities to replace the OREs in \eqref{eq:outreg2}.

However, unlike AREs, there is no direct LMI formulation for OREs. As shown in \cite[Chapter 1.3]{huang1994robust}, by incorporating an internal model into the original system to create an augmented system, solving OREs can be avoided, and the output regulation problem can be addressed by stabilizing the augmented system. This method transforms the task of solving equations for a set of systems into solving an LMI for a set of systems and avoids using the exosignal $v$, which may not always be measurable.

\subsubsection{Internal model-based method}
To revisit key concepts of the internal model-based method and its data-driven design, consider the following controller applied to the system in  \eqref{eq:sys:compact}
\begin{subequations}\label{eq:ctrl}
	\begin{align}
		u &= K_xx + K_z z\\
		\dot{z} &= G_1 z + G_2 e
	\end{align}
\end{subequations}
where $K_x$, $K_z$ are to be designed, and  $(G_1,G_2)$ is a minimal $n_y$-copy internal model of $S$ defined as follows.
\begin{definition}[Internal model \cite{huang2004nonlinear}]\label{def:internal}
	Given any square matrix $S$, a pair of matrices $(G_1, G_2)$ incorporates an $n_y$-copy internal model of $S$ if 
	\begin{equation}
		G_1 = {\rm blockdiag}(\underbrace{\beta,\cdots,\beta}_{n_y-{\rm tuple}}),~G_2 = {\rm blockdiag}(\underbrace{\sigma,\cdots,\sigma}_{n_y-{\rm tuple}})
	\end{equation}
where $\beta$ is a constant square matrix whose characteristic polynomial equals the minimal polynomial of $S$, and $\sigma$ is a constant column vector such that $(\beta, \sigma)$ is controllable.
\end{definition} 

Based on this definition, the closed-loop system \eqref{eq:sys} with the controller \eqref{eq:ctrl} is given by
\begin{subequations}\label{eq:sys:interlti}
	\begin{align*}
		\dot{x} & = A x + B u + E v\\
  e & = C x + F v\\
  u &= K_x x + K_{z} z\\
  \dot{z} &= G_1 z + G_2 e\\
  \dot{v} & = S v.
	\end{align*}
\end{subequations}
Letting $\xi = {\rm col}(x,z) \in \mathbb{R}^{n_{\xi}}$ with $n_{\xi} := n_x + n_{z}$, we obtain
\begin{subequations}\label{eq:sys:interlti1}
	\begin{align}
		\dot{\xi} & = A_\xi \xi + B_\xi u + E_\xi v\\
		e &= C_\xi \xi + F v\\
  \dot{v} & = S v
	\end{align}
\end{subequations}
where 
	\begin{align*}
		A_\xi &= \left[\begin{matrix}
			A & 0\\
			G_2 C & G_1
		\end{matrix}\right],~B_\xi = \left[\begin{matrix}
		B\\
		0
		\end{matrix}\right],\\
  E_\xi &= \left[\begin{matrix}
		E\\
		G_2 F
		\end{matrix}\right], ~C_\xi = \left[\begin{matrix}
		C&0
		\end{matrix}\right].
	\end{align*}
It has been shown in \cite[Lemma 1.27]{huang2004nonlinear} that if the controller gain $K_\xi := [K_x~K_z]$ is designed such that $A_\xi + B_\xi K_\xi$ is Hurwitz stable, then a solution to the OREs \eqref{eq:outreg2} for the actual system $(A,B,C,E,F)$ exists and can be constructed by $(A,B,C,E,F,K_x, K_z)$.
This indicates that the controller \eqref{eq:ctrl} solves Problem \ref{pro:output}.
Thus, addressing the OREs translates into designing a stabilizing gain matrix $K_\xi$, achievable by solving LMIs. This method is generalized to the data-driven setting in the following section.

\subsubsection{Data-driven output regulation}\label{sec:single:linear:data}
To design the output regulation controller in the absence of a system model, offline experiments are conducted to collect noisy input-state data
\begin{equation*}
	\mathbb{D} := \{(u(\tau), {\xi}(\tau),\dot{\xi}(\tau)\}_{\tau = 0}^{T- 1}
\end{equation*}
from the system
\begin{align*}
	\dot{\xi} & = A_\xi \xi + B_\xi u + E_\xi v + d\\
	\dot{v} & = S v\\
	e &= C_\xi \xi + F v
\end{align*}
where $d$ is the disturbance in data collection as defined in Section \ref{sec:pre:ddcontrol}.
Rearrange the data to form matrices
\begin{subequations}\label{eq:data:lti}
	\begin{align}
		U_- & = \left[\begin{matrix}
			{u}(0) & {u}(1) & \cdots & {u}(T-1)
		\end{matrix}\right]\\
  \Xi_- &= \left[\begin{matrix}
			{\xi}(0) & {\xi}(1) & \cdots & {\xi}(T-1)
		\end{matrix}\right]\\
  \Xi_+ &= \left[\begin{matrix}
			\dot{\xi}(0) & \dot{\xi}(1) & \cdots & \dot{\xi}(T-1)
		\end{matrix}\right].
	\end{align}
\end{subequations}
Define the data matrix of the unknown disturbance $d$ and the exosignal $v$ as 
\begin{align*}
	D &= \left[\begin{matrix}
		{d}(0) & {d}(1) & \cdots & {d}(T-1)
	\end{matrix}\right]\\
	V &= \left[\begin{matrix}
		{v}(0) & {v}(1) & \cdots & {v}(T-1)
	\end{matrix}\right].
\end{align*}
Assumptions \ref{as:peter:rank} and \ref{as:peter:noise} are rephrased as follows.
\begin{assumption}\label{as:rank:lti}
	The  matrix  $\left[\begin{smallmatrix}
		U_-\\
		\Xi_-
	\end{smallmatrix}\right]$ has full row rank.
\end{assumption}
\begin{assumption}\label{as:noise:lin}
    The matrices $V$ and $D$ have bounded energy, i.e., there exists some matrix $\Delta$ such that  
\begin{equation}\label{eq:noisebound}
	(E_\xi V + D)(E_\xi V + D)^\top \preceq \Delta \Delta^\top.
\end{equation}
\end{assumption}
Building on the results in Section \ref{sec:pre:ddcontrol}, 
it follows that $[A_\xi, B_\xi] \in \mathcal{C}_\xi$, where
\begin{equation}
	\mathcal{C}_\xi := \{\bar{\Phi}_\xi^\top = [\bar{A_\xi}~\bar{B}_\xi]: \Sigma + \Upsilon^\top \bar{\Phi}_\xi + \bar{\Phi}_\xi^\top \Upsilon + \bar{\Phi}_\xi^\top \Psi \bar{\Phi}_\xi \preceq 0\}
\end{equation} 
with 
\begin{align}\label{eq:elip:ABC}
	\begin{smallmatrix}\Psi := \left[\begin{smallmatrix}
		\Xi_-\\
		 U_-
	\end{smallmatrix}\right]\left[\begin{smallmatrix}
	\Xi_-\\
	U_-
	\end{smallmatrix}\right]^\top,~~\Upsilon := - \left[\begin{smallmatrix}
	\Xi_-\\
	U_-
	\end{smallmatrix}\right]\Xi_+^\top,~~\Sigma := \Xi_+ \Xi_+^\top - \Delta \Delta^\top.\end{smallmatrix}
\end{align}
To address Problem \ref{pro:output} in the absence of a system model, it is necessary to determine a controller gain 
$K_\xi$
  that stabilizes all pairs of system matrices $[\bar{A}_\xi~\bar{B}_\xi] \in \mathcal{C}_\xi$.
This requirement is satisfied by the following theorem. \begin{theorem}\label{thm:LTI}
	Under Assumption \ref{as:exo:lin}, for data matrices $U_-$, $\Xi_-$, $\Xi_+$ in \eqref{eq:data:lti} that satisfy Assumptions \ref{as:rank:lti} and \ref{as:noise:lin}, and $\Psi$, $\Upsilon$, $\Sigma$ in \eqref{eq:elip:ABC}, if the LMI in \eqref{eq:lmi:linear} is feasible for some matrices $P \succ 0$ and $Y$, then the controller \eqref{eq:ctrl} with $K_\xi = YP^{-1}$ stabilizes the augmented system \eqref{eq:sys:interlti1}.
	Furthermore, the controller \eqref{eq:ctrl} solves Problem \ref{pro:output}.
	\begin{equation}\label{eq:lmi:linear}
\left[\begin{matrix}
	- \Sigma &  \Upsilon^\top - \left[\begin{matrix}
		P\\
		Y
	\end{matrix}\right]^\top\\
	\star &  -\Psi
\end{matrix}\right] \prec 0.
	\end{equation} 
\end{theorem}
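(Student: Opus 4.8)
The plan is to decompose the statement into two independent tasks that mirror its two conclusions: first, that feasibility of the LMI \eqref{eq:lmi:linear} forces the augmented closed-loop matrix $A_\xi + B_\xi K_\xi$ to be Hurwitz; and second, that Hurwitz stability of this matrix, together with the internal-model structure of $(G_1,G_2)$, is enough to solve Problem \ref{pro:output}. The first task is essentially a reduction to Theorem \ref{thm:peter}, while the second is where the internal-model principle does the real work.

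For the first task, I would observe that the augmented system \eqref{eq:sys:interlti1} is exactly an instance of the plant treated in Section \ref{sec:pre:ddcontrol}, with state $\xi$, input $u$, and the exosignal term $E_\xi v$ absorbed into the effective disturbance. Concretely, during the offline experiment the collected data obey $\Xi_+ = A_\xi \Xi_- + B_\xi U_- + (E_\xi V + D)$, so that $E_\xi V + D$ plays the role of $D_-$ in \eqref{eq:data:peter}; Assumption \ref{as:noise:lin} is precisely the energy bound \eqref{eq:noisebound} needed to place the true pair $[A_\xi~B_\xi]$ inside the data-consistent set $\mathcal{C}_\xi$. A point worth flagging is that, although $A_\xi$ contains the unknown product $G_2 C$, the data $\xi$ and $\dot\xi$ are still available: $z$ is the internal-model state driven by the measured error $e$, so $\dot z = G_1 z + G_2 e$ is computable, and $\dot x$ is obtained by finite differences as in Section \ref{sec:pre:ddcontrol}. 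Since \eqref{eq:lmi:linear} coincides in form with \eqref{eq:sdp:peter} but is built from the augmented data \eqref{eq:elip:ABC}, Theorem \ref{thm:peter} applies verbatim: its feasibility yields $P \succ 0$ and $K_\xi = Y P^{-1}$ with $(\bar A_\xi + \bar B_\xi K_\xi)P + P(\bar A_\xi + \bar B_\xi K_\xi)^\top \prec 0$ for every $[\bar A_\xi~\bar B_\xi] \in \mathcal{C}_\xi$. Specializing to the true $[A_\xi~B_\xi] \in \mathcal{C}_\xi$ gives a common Lyapunov inequality, hence $A_\xi + B_\xi K_\xi$ is Hurwitz.

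For the second task, I would invoke the internal-model principle as in \cite[Lemma 1.27]{huang2004nonlinear}. Since $A_\xi + B_\xi K_\xi$ is Hurwitz and $S$ has no eigenvalues with negative real part (Assumption \ref{as:exo:lin}), the spectra of $A_\xi + B_\xi K_\xi$ and $S$ are disjoint, so the Sylvester equation $X S = (A_\xi + B_\xi K_\xi) X + E_\xi$ has a unique solution $X$. The decisive feature of the $n_y$-copy internal model $(G_1,G_2)$ is that this $X$ automatically satisfies the output constraint $C_\xi X + F = 0$; from $X$ one then recovers a solution $(\Pi,\Gamma)$ of the OREs \eqref{eq:outreg2} for the true plant, confirming that the controller \eqref{eq:ctrl} solves Problem \ref{pro:output}. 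Equivalently, and more directly, the change of coordinates $\tilde\xi := \xi - X v$ turns the closed loop into $\dot{\tilde\xi} = (A_\xi + B_\xi K_\xi)\tilde\xi$ with $e = C_\xi \tilde\xi + (C_\xi X + F)v = C_\xi \tilde\xi$; Hurwitz stability gives $\tilde\xi \to 0$ exponentially, hence $e(t)\to 0$, which is \eqref{eq:e}. Taking $v = 0$ in the same closed loop reduces it to $\dot\xi = (A_\xi + B_\xi K_\xi)\xi$, yielding the exponential stability required by the first half of Problem \ref{pro:output}.

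I expect the main obstacle to be justifying the constraint $C_\xi X + F = 0$, i.e.\ the step that actually certifies zero steady-state tracking error. Hurwitz stability of $A_\xi + B_\xi K_\xi$ alone only guarantees existence of a unique $X$; it is the minimality of the internal model (its characteristic polynomial matching the minimal polynomial of $S$, together with controllability of $(\beta,\sigma)$) that rules out any persistent error mode and forces $C_\xi X + F = 0$. The cleanest route is to cite \cite[Lemma 1.27]{huang2004nonlinear} directly, but if one wishes to argue from scratch, the core is a contradiction argument: a nonzero $C_\xi X + F$ would excite an undamped mode of $S$ that the internal model is specifically built to reproduce, contradicting the steady-state behavior dictated by the Hurwitz dynamics. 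Everything else, including the reduction in the first task, is routine once the augmented data are identified with the template of Theorem \ref{thm:peter}.
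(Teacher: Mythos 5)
Your proposal is correct and follows essentially the same route as the paper's proof: reduce the LMI feasibility to Theorem \ref{thm:peter} applied to the augmented data (with $E_\xi V + D$ as the effective disturbance), specialize the resulting robust Lyapunov inequality to the true pair $[A_\xi~B_\xi] \in \mathcal{C}_\xi$ to get Hurwitz stability, and then invoke the internal-model principle of \cite[Chapter 1.3]{huang2004nonlinear} to conclude output regulation. The only difference is that the paper leaves the second step as a citation, whereas you unpack the Sylvester-equation argument behind \cite[Lemma 1.27]{huang2004nonlinear}; that elaboration is accurate but not a different method.
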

\begin{proof}
  Theorem \ref{thm:peter} establishes that the feasibility of   \eqref{eq:lmi:linear} implies that the resulting gain matrix $K_{\xi}$ ensures 
    \begin{equation}
        (\bar{A}_{\xi} + \bar{B}_{\xi} K_{\xi})P + P(\bar{A}_{\xi} + \bar{B}_{\xi} K_{\xi})^\top - P \prec 0 
    \end{equation}
    for all $[\bar{A}_{\xi}~\bar{B}_{\xi}] \in \mathcal{C}_{\xi}$.
    Since $[A_{\xi}~B_{\xi}] \in \mathcal{C}_{\xi}$, this further indicates that $K_{\xi}$ ensures that $A_{\xi} + B_{\xi} K_{\xi}$ is Hurwitz stable, completing the proof according to \cite[Chapter 1.3]{huang2004nonlinear}.
\end{proof}

The effectiveness of the proposed method is demonstrated through a numerical example below. All simulations were performed using Matlab 2022a on a Lenovo laptop with a 14-core i7-12700H processor at 2.3GHz. The proposed LMIs were solved using CVX \cite{grant2014cvx}.

\subsubsection{Example 1}\label{sec:single:linear:exm}
Consider the dynamics of a robot system originally studied in \cite{su2012general}. The system matrices 
$A$, $B$, $C$, $E$, $F$, and $S$ are given by
\begin{align*}
    &A = \left[\begin{matrix}
        0 & 1\\
        1 & 2
    \end{matrix}\right], B = \left[\begin{matrix}
        0\\
        1
        \end{matrix}\right],C= [1~0],E = \left[\begin{matrix}
            0 & 0 & 0 & 0\\
            1 & 0 & 0 & 0
        \end{matrix}\right],\\
        & F = \left[
        \begin{matrix}
           -1&0&-1&0 
        \end{matrix}
        \right],~S = \left[\begin{matrix}
    0 & \omega_1 & 0 & 0\\
    -\omega_1 & 0 & 0 & 0\\
    0 & 0 & 0 & \omega_2\\
    0 & 0 & -\omega_2 & 0
\end{matrix}\right]
\end{align*}
where $\omega_1 = \pi/5$ and $\omega_2 = 1$.
According to Definition \ref{def:internal}, the matrices $G_1$ and $G_2$ were chosen as follows
\begin{equation}\label{eq:g1g2}
    G_1 = \left[
    \begin{matrix}
    0 & 1& 0& 0\\
    0 &0 &1 &0\\
    0 &0& 0& 1\\
    -0.3948 & 0 & -1.3948&  0
    \end{matrix}
    \right],
G_2 = \left[
    \begin{matrix}
    0\\
    0\\
    0\\
    1
    \end{matrix}
    \right].
\end{equation}
We collected noisy data trajectories of length $T = 20$ from random initial conditions with random inputs uniformly generated from $[-0.5,0.5]$, the exosignal $v$ from $[-0.0025,0.0025]$ and the noise $d$ from $[-0.01,0.01]$.
Upon solving the LMI \eqref{eq:lmi:linear} and designing the controller as in \eqref{eq:ctrl}, Fig. \ref{fig:lin} illustrates the tracking performance under the proposed controller. The tracking error is observed to asymptotically converge to zero, verifying the correctness and effectiveness of the proposed data-driven control method.

\begin{figure}
	\centering
	\includegraphics[width=9cm]{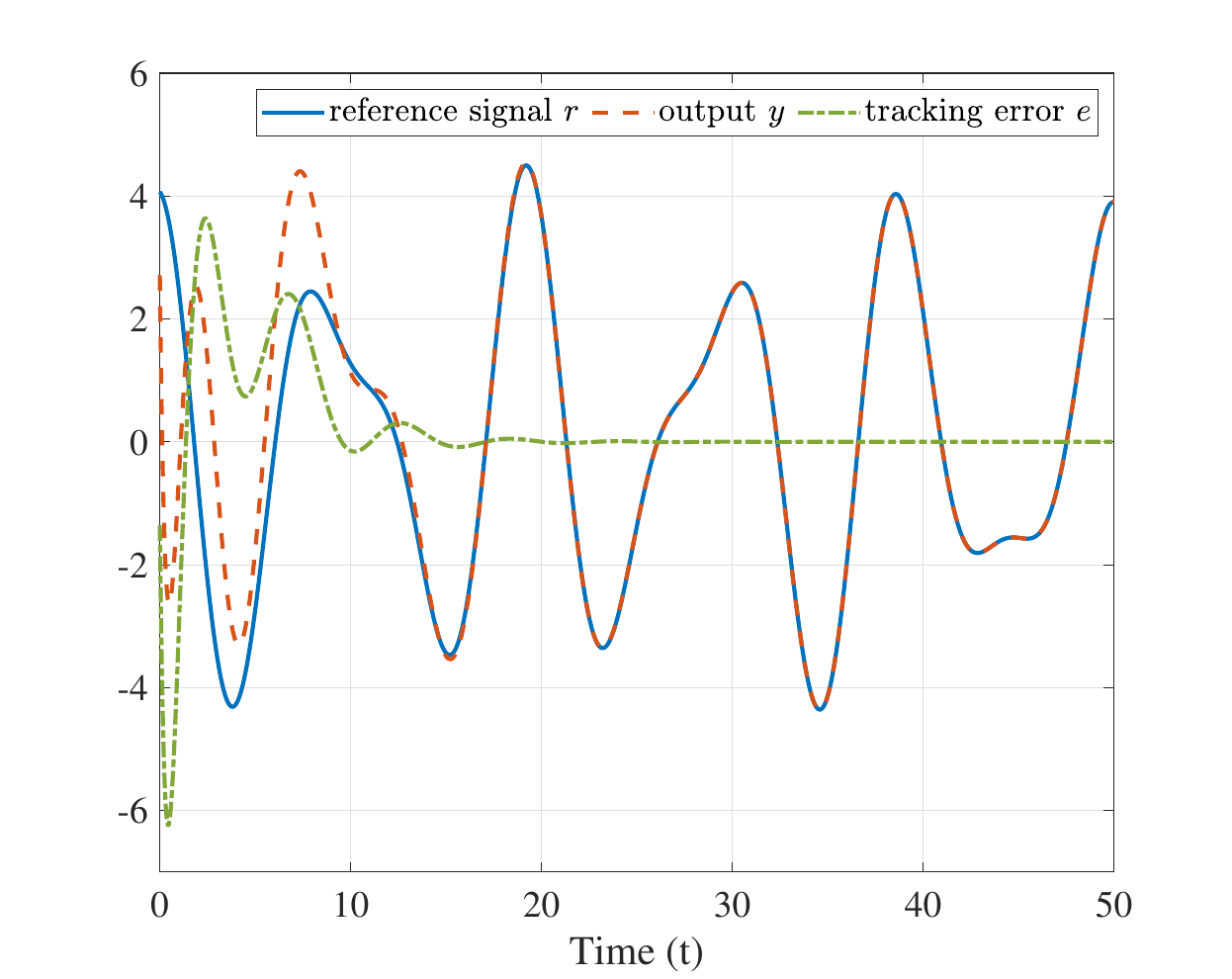}\\
	\caption{Tracking performance under the controller \eqref{eq:ctrl}.}\label{fig:lin}
	\centering
\end{figure}

\subsection{Nonlinear Output Regulation}\label{sec:single:non}
In the previous section, we presented an internal model-based data-driven controller for the linear output regulation problem. This approach suggests the possibility of developing a similar design for the nonlinear output regulation problem. Since achieving exact nonlinear output regulation, i.e., zero tracking error, is challenging even in the model-based scenario, we introduce a more tractable problem, termed the 
$k$th-order nonlinear output regulation problem. We first outline its model-based design and then extend the method to devise a data-driven controller.

\subsubsection{Nonlinear Output Regulation Equations} \label{sec:single:non:equations}
Consider a smooth nonlinear system
\begin{subequations}\label{eq:sys:non0}
	\begin{align}
		\dot{x} & = f(x,u,v) \\
		\dot{v} & = S v \label{eq:sys:non0:exo} \\
		e &= h(x,v)
	\end{align}
\end{subequations}
where the vectors $x$, $u$, $v$, and $e$ hold the same meanings as in the linear case in Section \ref{sec:single:linear}.
Assume that $(x^*, u^*, 0)$ is an equilibrium of the function $f(x,u,v)$.
Given that any known equilibrium can be transformed to the origin by a change of coordinates, we assume without loss of generality that $(x^*, u^*) = (0,0)$.
Additionally, the exosystem satisfies the following assumption.
\begin{assumption}\label{as:exo:non}
	The matrix $S$ in \eqref{eq:sys:non0:exo} is known and all its eigenvalues have zero real parts.
\end{assumption}
Similar to the linear case, addressing the nonlinear output regulation problem typically requires solving the following nonlinear output regulation equations (NOREs)
\begin{subequations}\label{eq:outreg:non}
	\begin{align}
		\frac{\partial \mathbf{x}(v)}{\partial v} S v & = f(\mathbf{x}(v), \mathbf{u}(v), v)\\
		0 & = h(\mathbf{x}(v),v)
	\end{align}
\end{subequations}
where $\mathbf{x}(v)$ and $\mathbf{u}(v)$ are two sufficiently smooth functions defined for $v$ satisfying $\mathbf{x}(0) = 0$ and $\mathbf{u}(0) = 0$.
The controller can then be designed based on  $\mathbf{x}(v)$ and $\mathbf{u}(v)$.
However, solving the NOREs \eqref{eq:outreg:non} is very challenging even in the model-based scenario. Therefore, instead of seeking exact output regulation, we consider a more tractable problem known as the 
$k$th-order nonlinear output regulation problem \cite{huang1994robust}, described as follows.

\begin{problem}\label{pro:k-output}
{\textbf{\textup{($k$th-order nonlinear output regulation)}}}
	Design a control law $u$ such that the closed-loop system \eqref{eq:sys:non0} satisfies the following two properties.
\end{problem}

\begin{property}\label{proper:nonstable}
For all sufficiently small $x(0)$ and $v(0)$, the trajectory ${\rm col}(x(t), v(t))$ of the closed-loop system \eqref{eq:sys:non0} exists and is bounded for all $t \ge 0$.
\end{property}
\begin{property}\label{proper:koutreg}
	For all sufficiently small $x(0)$ and $v(0)$, the trajectory ${\rm col}(x(t), v(t))$ of the closed-loop system \eqref{eq:sys:non0} adheres to
 \small\begin{equation*}
		\lim_{t \rightarrow \infty} (e(t) \!- O(v^{(k+1)}(t)) \!= \lim_{t \rightarrow \infty} (h(x(t),v(t)) - O(v^{(k+1)}(t))) = 0
	\end{equation*}
 \normalsize
	where $O(v^{(k+1)})$ is such that 
\begin{equation}\label{eq:property:o}
		\lim_{v \rightarrow 0} \frac{\Vert O(v^{(k+1)})\Vert}{\Vert v\Vert^{k+1}}
	\end{equation}
 is a finite constant.
\end{property}

With these definitions, finding functions  $\mathbf{x}(v)$ and $ \mathbf{u}(v)$ that solve the NOREs in \eqref{eq:outreg:non} boils down to finding functions $\mathbf{x}^{k}(v)$ and $\mathbf{u}^{k}(v)$ with $\mathbf{x}^{k}(0) = 0,\mathbf{u}^{k}(0) = 0$ such that
\begin{subequations}\label{eq:koutreg:non}
    \begin{align}
        \frac{\partial \mathbf{x}^{k}(v)}{\partial v} S v &= f(\mathbf{x}^{k}(v), \mathbf{u}^{k}(v), v)\\
		o^k(v) & = h(\mathbf{x}^{k}(v), \mathbf{u}^{k}(v),v).
    \end{align}
\end{subequations}
These equations are referred to as the $k$th-order NOREs. It has been shown in \cite[Lemma 4.7]{huang2004nonlinear} that solutions to equations \eqref{eq:koutreg:non}, i.e., $\mathbf{x}^{k}(v)$ and $\mathbf{u}^{k}(v)$, can be represented using the Taylor's series expansion of the functions $f$ and $h$.

Clearly, this approach is infeasible when only noisy input-state data are available, as the accurate Taylor's series expansions of functions  $f$ and $h$ cannot be explicitly determined. Therefore, inspired by the method in Section \ref{sec:single:linear},  we seek an alternative approach to address Problem \ref{pro:k-output} without solving the $k$th-order NOREs in \eqref{eq:koutreg:non}.
This can be achieved by integrating the original system with a well-designed internal model, as described in \cite{huang1994robust}. We will first briefly revisit this method and then present its data-driven design.

\subsubsection{Internal Model of the $k$-Fold Exosystem}\label{sec:single:non:inter}

We begin by introducing some necessary notation. Leveraging Taylor’s expansion, the first-order approximation of \eqref{eq:sys:non0} around the equilibrium $(0,0,0)$ is given by
\begin{subequations}\label{eq:sys:non}
	\begin{align}
		\dot{x} & = f(x,u,v) = A x + B u + E v + \alpha(x,u,v)\\
		\dot{v} & = S v\\
		e &= h(x,v) = C x + F v + \gamma(x,v). 
	\end{align}
\end{subequations}
Here, the functions $\alpha(x,u,v)$ and $\gamma(x,v)$ represent the higher-order remainders, and the matrices 
 $A$, $B$, $C$, $E$, and $F$ are defined as
\begin{align*}
    A &= \frac{\partial f}{\partial x} (0,0,0), ~B = \frac{\partial f}{\partial u}(0,0,0),~C = \frac{\partial h}{\partial x} (0,0,0)\\
    E &= \frac{\partial f}{\partial v}(0,0,0), ~F = \frac{\partial h}{\partial v}(0,0,0).
\end{align*}

Using the notation $v^{[\ell]}$ and $v^{(\ell)}$ from Section \ref{sec:intro:notation}, let $M_{\ell}$ and $N_{\ell}$ be constant matrices such that 
\begin{equation*}
	v^{[\ell]} = M_{\ell} v^{(\ell)},~~v^{(\ell)} = N_\ell v^{[\ell]}.
\end{equation*}
According to \cite[Lemma 1]{huang1994robust}, we have that $\frac{d}{dt}\dot{v}^{[\ell]} = S^{[\ell]} v^{[\ell]}$, where
\begin{equation}\label{eq:S^ell}
	S^{[\ell]} = M_{\ell }\Big[\sum_{i = 1}^{\ell} I_{n_v}^{i-1} \otimes S \otimes I_{q^{\ell - i}}\Big]N_{\ell}.
\end{equation}

Building on these preliminaries, we use the same controller as for the linear case
\begin{subequations}\label{eq:ctrl:non}
	\begin{align}
		u & = K_x x + K_z z\\
		\dot{z} & = G_1 z + G_2 e.
	\end{align}
\end{subequations}
Instead of incorporating an $n_y$-copy internal model of the original exosystem $\dot{v} = Sv$ from Section \ref{sec:single:linear}, the pair $(G_1, G_2)$ in \eqref{eq:ctrl:non} incorporates an $n_y$-copy internal model of the $k$-fold exosystem, that is,
\begin{equation}
    \frac{d}{dt}\left[
    \begin{matrix}
        v^{[1]}\\
        v^{[2]}\\
        \vdots\\
        v^{[k]}
    \end{matrix}
    \right] = \mathcal{S}_{kf} \left[
    \begin{matrix}
        v^{[1]}\\
        v^{[2]}\\
        \vdots\\
        v^{[k]}
    \end{matrix}
    \right]
\end{equation}
where
\begin{equation}\label{eq:S_kf}
	\mathcal{S}_{kf} = \left[
	\begin{matrix}
		S^{[1]} & 0 & \cdots & 0\\
		0 & S^{[2]} & \cdots & 0\\
		\vdots & \vdots & \ddots & \vdots\\
		0 & 0 & \cdots & S^{[k]}
	\end{matrix}
	\right]
\end{equation} 
and $S^{[\ell]}$ is given by \eqref{eq:S^ell} for $\ell = 1,2,\cdots,k$.
This internal model is known as the $k$th-order internal model.
Plugging this controller into the system \eqref{eq:sys:non} and letting $\xi := {\rm col}(x, z)$, we obtain that 
\begin{subequations}\label{eq:sys:non:aug}
	\begin{align}
\left[
\begin{matrix}
    \dot{x}\\
    \dot{z}
\end{matrix}
\right]	& = {\left[
\begin{matrix}
    A & 0\\
    G_2 C & G_1
\end{matrix}
\right]} {\left[
\begin{matrix}
   x\\
    z
\end{matrix}
\right]} + {\left[
\begin{matrix}
    B\\
    0
\end{matrix}
\right]} u + {\left[
\begin{matrix}
    E\\
    G_2 F
\end{matrix}
\right]} v + {\left[
\begin{matrix}
    \alpha(x,u,v) \\
    G_2 \gamma(x,v)
\end{matrix}
\right]}\nonumber\\
& \!\!:= A_{\xi} \xi + B_{\xi} u + E_\xi v + \psi(\xi,v)\\
	\dot{v} & = S v\\
	u & = [K_x ~K_z] \xi := K_{\xi} \xi\\
	e &= [C~0] \xi + F v + \gamma(x,v).
\end{align}
\end{subequations}

It has been shown in \cite[Theorem 3.12]{huang1994robust} that as long as the controller gain matrix 
$K_{\xi}$ is designed such that $A_\xi + B_\xi K_\xi$ is Hurwitz, Problem \ref{pro:k-output} is solved.
Therefore, Theorem \ref{thm:LTI} can be directly extended to solve the 
$k$th-order nonlinear output regulation problem.

\subsubsection{A Data-Driven Implementation} \label{sec:single:non:data}

To gather information about the nonlinear system \eqref{eq:sys:non:aug}, $T$ independent offline experiments are performed around the equilibrium, resulting in the data $\mathbb{D} := \{u(\tau), {\xi}(\tau),\dot{\xi}(\tau)\}_{\tau = 0}^{T- 1}$.
Forming the matrices $\Xi_+$, $\Xi_-$ and $U_-$ as in \eqref{eq:data:lti}
\begin{subequations}\label{eq:data:non}
	\begin{align}
	U_- & = \left[\begin{matrix}
			{u}(0) & {u}(1) & \cdots & {u}(T-1)
		\end{matrix}\right]	\\
  \Xi_- &= \left[\begin{matrix}
			{\xi}(0) & {\xi}(1) & \cdots & {\xi}(T-1)
		\end{matrix}\right]\\
  \Xi_+ &= \left[\begin{matrix}
			\dot{\xi}(0) & \dot{\xi}(1) & \cdots & \dot{\xi}(T-1)
		\end{matrix}\right].
	\end{align}
\end{subequations}
These matrices are assumed to contain sufficient information about the nonlinear system \eqref{eq:sys:non0}, in accordance with Assumption \ref{as:rank:lti}. Likewise, let us define the data matrices of the unknown higher-order remainders and the exosignal as follows
\begin{align}
	D &= \left[\begin{matrix}
		\psi(\xi(0),v(0)) & \cdots & \psi(\xi(T-1),v(T-1))
	\end{matrix}\right]\\
	V &= \left[\begin{matrix}
		{v}(0) & {v}(1) & \cdots & {v}(T-1)
	\end{matrix}\right]
\end{align}
which are assumed to be bounded, i.e., obeying Assumption \ref{as:noise:lin}.

\begin{remark}
The approximation error between the actual derivative and the approximated derivative, denoted as
 $d$, is neglected as it plays a role analogous to the approximation error $\psi(x,u)$. As analyzed in \cite[Remark 8]{guo2022data}, the term 
 $\psi(\xi(\tau),u(\tau), v(\tau))^\top \psi(\xi(\tau),u(\tau), v(\tau)) \le \sum_{\iota = 1}^{n_x} (n_x + n_u + n_v)L_{\iota}^2\Vert (\xi(\tau), u(\tau), v(\tau))\Vert$ with $L_{\iota} >0$ being some Lipschitz constant. This indicates that during data collection, the closer the input-state data are to the equilibrium, the smaller $\Delta$ in \eqref{eq:noisebound} will be. 
\end{remark}

Based on these data, we extend Theorem \ref{thm:LTI} to nonlinear systems. The proof follows directly from \cite[Theorem 3.12]{huang1994robust} and that of Theorem \ref{thm:LTI}, and is omitted here.

\begin{theorem}\label{thm:non}
	Under Assumption \ref{as:exo:non}, for data $U_-$, $\Xi_-$, $\Xi_+$ in \eqref{eq:data:non} satisfying Assumptions \ref{as:rank:lti} and \ref{as:noise:lin} and $\Psi$, $\Upsilon$, $\Sigma$ in \eqref{eq:elip:ABC}, if the LMI in \eqref{eq:lmi:non} is feasible for some matrices $P \succ 0$ and $Y$, then the  controller \eqref{eq:ctrl:non} with $K_\xi = YP^{-1}$ solves Problem \ref{pro:k-output}
	\begin{equation}\label{eq:lmi:non}
\left[\begin{matrix}
	- \Sigma &  \Upsilon^\top - \left[\begin{matrix}
		P\\
		Y
	\end{matrix}\right]^\top\\
	\star &  -\Psi
\end{matrix}\right] \prec 0.
	\end{equation} 
\end{theorem}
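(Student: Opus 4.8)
The plan is to reduce the nonlinear $k$th-order regulation claim to Hurwitz stabilization of the \emph{linearized} augmented dynamics and then invoke the model-based result \cite[Theorem 3.12]{huang1994robust}. The data-driven layer is essentially identical to Theorem \ref{thm:LTI}; the only genuinely new observation is that the nonlinear remainder is absorbed into the noise term of the data-consistent set. First I would note that the collected data obey $\Xi_+ = A_\xi \Xi_- + B_\xi U_- + (E_\xi V + D)$, where $A_\xi, B_\xi$ are the linearization matrices defined in \eqref{eq:sys:non:aug} and the columns of $D$ are now the higher-order remainders $\psi(\xi(\tau),v(\tau))$ rather than measurement noise. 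Under Assumption \ref{as:noise:lin} the aggregated disturbance $E_\xi V + D$ satisfies the energy bound \eqref{eq:noisebound}, so by the very construction of $\mathcal{C}_\xi$ through $\Psi$, $\Upsilon$, $\Sigma$ in \eqref{eq:elip:ABC} the true linearization pair satisfies $[A_\xi~B_\xi] \in \mathcal{C}_\xi$.

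Second, since the LMI \eqref{eq:lmi:non} coincides with \eqref{eq:lmi:linear}, Theorem \ref{thm:LTI} (equivalently Theorem \ref{thm:peter}) applies verbatim: feasibility yields $K_\xi = YP^{-1}$ satisfying $(\bar{A}_\xi + \bar{B}_\xi K_\xi)P + P(\bar{A}_\xi + \bar{B}_\xi K_\xi)^\top \prec 0$ for every $[\bar{A}_\xi~\bar{B}_\xi] \in \mathcal{C}_\xi$. Because the true pair lies in $\mathcal{C}_\xi$ by the previous step, this robust Lyapunov inequality specializes to certify that $A_\xi + B_\xi K_\xi$ is Hurwitz. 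Third, with $(G_1, G_2)$ chosen as the $n_y$-copy internal model of the $k$-fold exosystem \eqref{eq:S_kf}, I would invoke \cite[Theorem 3.12]{huang1994robust}, which asserts that Hurwitz stability of $A_\xi + B_\xi K_\xi$ is sufficient for the closed-loop system \eqref{eq:sys:non:aug} to satisfy both Property \ref{proper:nonstable} (local boundedness of trajectories) and Property \ref{proper:koutreg} ($k$th-order tracking), thereby solving Problem \ref{pro:k-output}.

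The main obstacle is the locality and the noise-bounding step, not the algebra. One must ensure that the remainder $\psi$ is genuinely captured by a fixed $\Delta$ in Assumption \ref{as:noise:lin}; this holds only when the offline experiments are conducted sufficiently near the equilibrium, since $\|\psi\|$ scales with the distance of the collected data to the origin (cf. the Lipschitz estimate in the preceding remark). Correspondingly, the conclusion is inherently local---Properties \ref{proper:nonstable} and \ref{proper:koutreg} hold only for sufficiently small $x(0)$ and $v(0)$---and all the substantive center-manifold/Taylor-expansion content that transfers Hurwitz stability of the linearization into $k$th-order regulation is delegated to \cite[Theorem 3.12]{huang1994robust}. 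The data-driven contribution is thus confined to certifying the required Hurwitz gain from noisy input-state data, which is exactly what Step~2 provides.
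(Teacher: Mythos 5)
Your proposal is correct and takes essentially the same route as the paper: the paper omits the proof precisely because it ``follows directly from \cite[Theorem 3.12]{huang1994robust} and that of Theorem~\ref{thm:LTI}'', i.e., the LMI feasibility certifies a gain $K_\xi = YP^{-1}$ that Hurwitz-stabilizes every pair in $\mathcal{C}_\xi$ (hence the true linearization, with the higher-order remainders $\psi$ absorbed into the bounded disturbance of Assumption~\ref{as:noise:lin}), and the model-based result then yields Properties~\ref{proper:nonstable} and \ref{proper:koutreg}. Your added remarks on locality and on collecting data near the equilibrium match the paper's own discussion surrounding the theorem.
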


Regarding this theorem, we provide the following remarks.

\begin{remark}[Nonlinear exosystem]
   The presented method is applicable to nonlinear exosystems, i.e., $\dot{v} = S(v)$ where $S(v)$ is a nonlinear function of $v$; see \cite[Chapter 3]{huang2004nonlinear} for details.

\end{remark}

\begin{remark}[Choice of controller \eqref{eq:ctrl:non}]\label{rmk:nonctrl}
It is worth noting that the data-driven controller design for Problem \ref{pro:k-output} is not unique. For illustration and convenience, we adopt the same controller as in the linear case. By embedding an internal model of the $k$-fold exosystem into the original system, the output regulation problem is transformed into the stabilization problem of the augmented system with $v = 0$, as follows
\begin{equation}\label{eq:augsys}
    \left[\begin{matrix}
    \dot{x}\\
    \dot{z}
\end{matrix}\right] = \left[\begin{matrix}
    f(x,z,0)\\
    G_1 z + G_2 e
\end{matrix}\right].
\end{equation}
Therefore, as long as the controller $u = k(x,z)$ stabilizes the augmented system \eqref{eq:augsys}, Problem \ref{pro:k-output} is addressed.
For instance, if we consider $f(x,u,v)$ having the form $f(x,u,v) = A Z( x ) + B u + E_v$ with $Z(x)$ containing both the linear $x$ as well as the vector of nonlinear functions of $x$, then the controller can be designed as 
\begin{equation}\label{eq:ctrl:cancel}
    u = K_x Z(x) + K_z z = K_\xi \left[\begin{matrix}
    Z(x)\\z
\end{matrix}\right].
\end{equation}
In this case, several data-driven methods can be applied, e.g., \cite{depersis2023cancel,hu2024enforcing}, which can guarantee the global stability of the augmented system under mild conditions. However, the global stability of the augmented system \eqref{eq:sys:non:aug} with $v = 0$ does not imply global output regulation \cite[Remark 7.1]{huang2004nonlinear}. Global output regulation for nonlinear systems remains an open question.
\end{remark}

\subsubsection{Example 2} \label{sec:single:non:exm}
Consider the dynamics of the ball and beam system, adapted from \cite[Chapter 5.6]{huang2004nonlinear}
\begin{subequations}
	\begin{align}
		\dot{x}_1 &= x_2 + v_2\\
		\dot{x}_2 &= h_0 x_1 x_4^2 - g h_0 \sin{x_3}\\
  \dot{x}_3 &= x_4\\
  \dot{x}_4 &= u\\
  e & = x_1 - (v_1 + v_3)
	\end{align}
\end{subequations}
where $h_0 = 0.7134$, $g = 9.81$, and the exosystem is the same as  that in Section \ref{sec:single:linear:exm}. 

Noisy trajectories were collected with a number of $T = 50$ independent experiments from random initial conditions and random inputs uniformly generated from $[-0.1,0.1]$, the exosignal $v$ from $[-0.005,0.005]$, and noise $d$ from $[-0.002,0.002]$. By implementing a $2$nd-order internal model, the nonlinear tracking performance of the proposed controller \eqref{eq:ctrl:non} is shown in Fig. \ref{fig:non}(a).
In addition, Fig. \ref{fig:non}(b) compares the tracking performances of the controller \eqref{eq:ctrl:non} with a 1st-order internal model (blue solid line), a $2$nd-order internal model (red dashed line), and the controller 
\eqref{eq:ctrl:cancel} with  a $2$nd-order internal model with $K_\xi$ designed using the method in \cite{depersis2023cancel} (green dash-dotted line).

Furthermore, considering system \eqref{eq:sys:non:aug} with $v = 0$, Fig. \ref{fig:non:v}(a) illustrates that both the state and the output converge to zero under the controller \eqref{eq:ctrl:non}.
However, using the same controller with the same initial condition $\xi(0)$, it can be observed from Fig. \ref{fig:non:v}(b) that the tracking error diverges for sufficiently large $v(0)$. This indicates that, in the nonlinear case, a controller stabilizing the system \eqref{eq:sys:non:aug} with $v = 0$  does not imply output regulation.
This differs from the linear case and verifies Remark \ref{rmk:nonctrl}.

\begin{figure}[!htb]
	\centering
	\includegraphics[width=9cm]{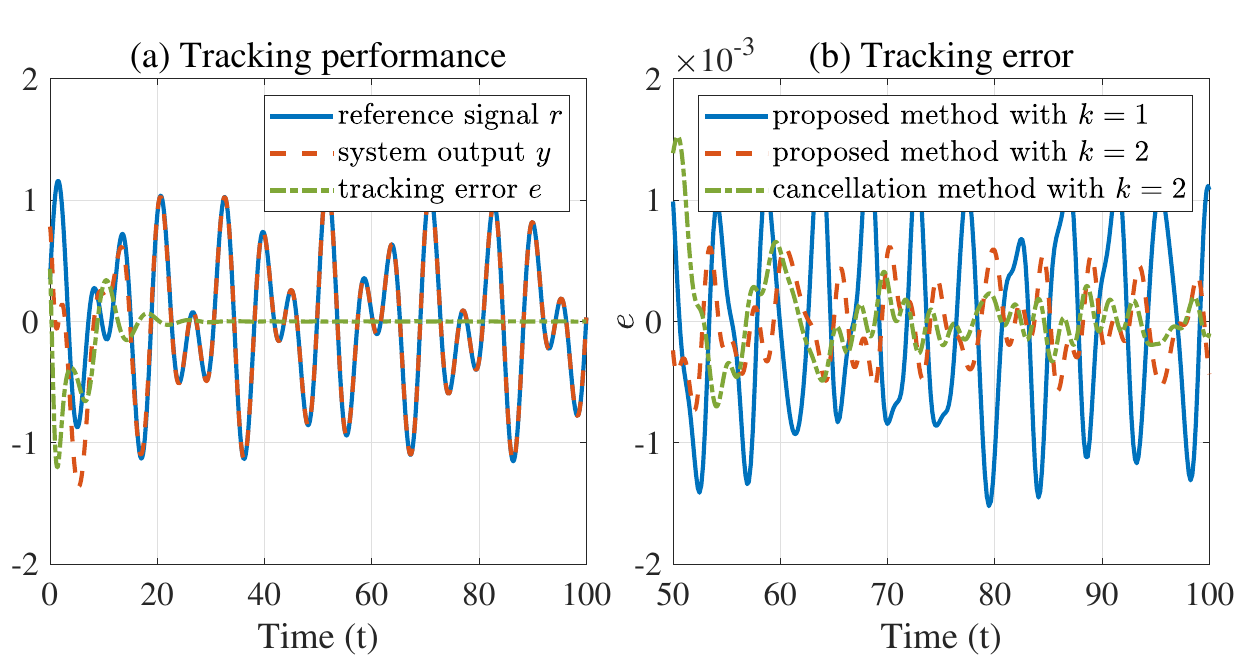}\\
	\caption{Data-driven output regulation of nonlinear systems.}\label{fig:non}
	\centering
\end{figure}

\begin{figure}[!htb]
	\centering
	\includegraphics[width=9cm]{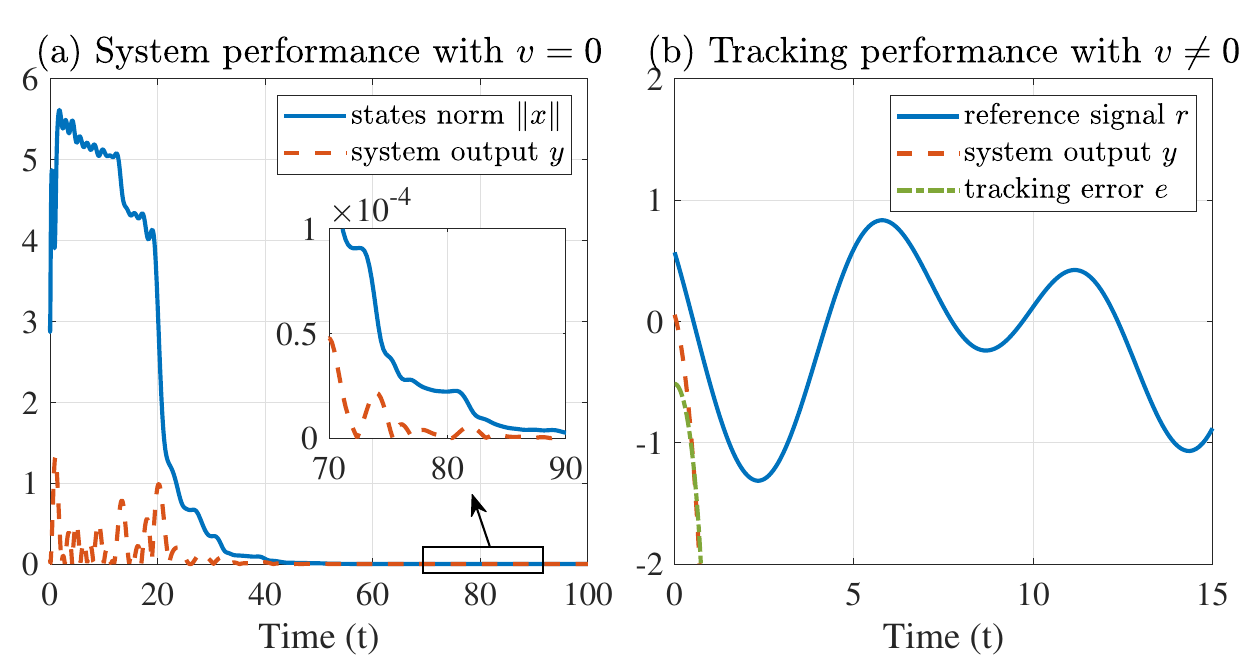}\\
\caption{Data-driven local output regulation of nonlinear systems.}\label{fig:non:v}
	\centering
\end{figure}

\section{Data-driven Cooperative Output Regulation}\label{sec:mas}

In the previous sections, we proposed data-driven solutions for output regulation problems of both linear and nonlinear time-invariant systems. In this section, we extend these solutions to address linear and nonlinear multi-agent systems (MASs).

\subsection{Linear Multi-Agent Systems}

Consider a continuous-time linear MAS composed of 
 $N$ heterogeneous agents 
\begin{equation}
	\begin{split}
		\label{eq:mas}
		\dot{x}_{i}&= {A}_{i} x_{i}+{B}_{i} u_{i}+E_{i} v\\
		y_i&={C}_ix_{i},\quad i=1,2,\ldots,N
	\end{split}
\end{equation}
where $x_i\in \mathbb{R}^{n_{xi}}$, $u_i\in \mathbb{R}^{n_{ui}}$, and $y_i \in \mathbb{R}^{n_{y}}$ represent the state,  control input, and  output of agent $i$, respectively. The matrices ${A}_i$, ${B}_i$, ${C}_i$, and $E_i$ are fixed but unknown. Recalling Section 
\ref{sec:single:linear}, $v \in\mathbb{R}^{n_v}$ consists of the reference signal to be tracked and the disturbance to be rejected, assumed to be generated by the following exosystem
\begin{equation}
	\label{eq:mas:exo}
	\begin{split}
		\dot{v} &={S} v \\
		{y}_0 &=-{F} v 
	\end{split}
\end{equation}
where $y_{0} \in \mathbb{R}^{n_{y}}$ represents the output of the exosystem. The exosystem satisfies Assumption~\ref{as:exo:lin}.
We define the tracking error of agent $i$ as $e_i = y_i -y_0 $.

It is worth noting that the dimensions of the dynamics and/or states can differ across the 
 $N$ agents, while their output dimensions must be identical to achieve cooperative output regulation. The objective is to synchronize the outputs of all agents to that of the exosystem by implementing distributed controllers locally at each agent such that
\begin{equation}\label{eq:mas:goal}
\lim_{t\rightarrow\infty}e_i(t)=\lim_{t\rightarrow\infty}(y_i(t)-y_0(t))=0
\end{equation}
holds for all $i=1,2,\ldots,N$.

To analyze and synthesize the cooperative output regulation problem of linear MASs, it is instrumental to revisit some concepts in graph theory.

\emph{(Graph theory.)} Consider a weighted graph ${\mathcal{G}}=({\mathcal{V}}, {\mathcal{E}})$ to depict the communication topology among $N$ agents in \eqref{eq:mas}. Here, ${\mathcal{V}}=\{\nu_1,\ldots, \nu_N \}$ denotes a nonempty set of nodes, while ${\mathcal{E}} \subseteq {\mathcal{V}}\times {\mathcal{V}}$ represents a set of edges. The edge $(\nu_i, \nu_j)$ belongs to ${\mathcal{E}}$ if there is a link from node $\nu_j$ to node $\nu_i$. The adjacency matrix $\mathcal{A}=[a_{ij}]\in \mathbb{R}^{N\times N}$ is defined such that $a_{ij}>0$ if $(\nu_j, \nu_i)\in \mathcal{E}$, and $a_{ij}=0$ otherwise. It is assumed that there are no self-loops, meaning that $a_{ii}=0$ holds for all $i=1,2,\ldots,N$. Let $\mathcal{L}=[l_{ij}]\in \mathbb{R}^{N\times N}$ denote the Laplacian of ${\mathcal{G}}$ corresponding to $\mathcal{A}$, where $l_{ii} = \Sigma_{j=1}^{N}a_{ij}$ and $l_{ij} = -a_{ij}$ for $i\neq j$.

In the context of cooperative output regulation, the 
 $N$ agents in \eqref{eq:mas} along with the exosystem in \eqref{eq:sys:exo} can be collectively viewed as a leader-following MAS, with the exosystem as the leader and the $N$ agents as the followers. The interactions within the leader-following MAS are modeled by an extended graph $\bar{\mathcal{G}}=(\bar{\mathcal{V}}, \bar{\mathcal{E}})$, where $\bar{\mathcal{V}}={\mathcal{V}}\cup \nu_0$ with $\nu_0$ representing  the exosystem node. The set $\bar{\mathcal{E}}$ includes all the arcs in ${\mathcal{E}}$ as well as the arcs between $\nu_0$ and ${\mathcal{E}}$. 
 
A graph $\bar{\mathcal{G}}$ is said to contain a directed spanning tree if there exists a node, known as the root, from which every other node in $\bar{\mathcal{V}}$ can be reached through a directed path. The pinning matrix $\Lambda = {\rm{diag}}\{a_{10},\ldots,a_{N0}\}\in\mathbb{R}^{N\times N}$ describes the accessibility of the node $\nu_0$ to the remaining nodes $\nu_i\in \mathcal{V}$. Specifically, $a_{i0} > 0$ if $(\nu_0,\nu_i)\in \bar{\mathcal{E}}$, and $a_{i0} = 0$ otherwise. Define the matrix $H:=\mathcal{L}+\Lambda$. Then, we have $H\mathbf{1}_N =\Lambda\mathbf{1}_N$. 
Denote all the eigenvalues of the matrix
${H}$ by $\lambda_i$ for $i=1,2,\ldots,N$.

Before proceeding, a standard assumption about the communication topology for cooperative output regulation is provided as follows.

 \begin{assumption}
	\label{as:graph}
	The graph $\bar{\mathcal{G}}$ contains a directed spanning tree with the node $\nu_0$ as the root.
\end{assumption}

Now, we introduce a typical distributed control protocol. First, we define a virtual tracking error for each agent 
 $i$ as follows 
\begin{equation}
	\label{eq:virtual:error}
		e_{vi} =\sum_{j=1}^{N}a_{ij}(y_i -y_j )+a_{i0}(y_i -y_0 ).
\end{equation}
Consider a distributed state feedback control law for each agent in \eqref{eq:mas} as follows
\begin{subequations}\label{eq:controller}
    \begin{align}
		u_{i} &=K_{xi}\Big(\sum_{j=1}^{N}a_{ij}(x_j -x_i )+a_{i0}x_i \Big)+K_{zi}z_i \label{eq:controller:u}\\
		\dot {z}_i  &= G_1z_i +G_2e_{vi} , \quad i=1,2,\ldots,N	\label{eq:controller:z}	
    \end{align}
\end{subequations}
where $z_i \in\mathbb{R}^{n_{z}}$ and  $K_{zi}$, $K_{xi}$, $G_1$, and $G_2$ are constant matrices to be designed. In particular, the matrix pair $(G_1,G_2)$ is defined as in Definition~\ref{def:internal}.

Define $x ={\rm col}(x_1,\ldots,x_N)$, $z ={\rm col}(z_1 ,\ldots,z_N)$, $u ={\rm col}(u_1 ,\ldots,u_N)$, $e_v ={\rm col}(e_{v1},\ldots,e_{vN})$, $\tilde{A}={\rm{blockdiag}}\{A_{1},\ldots,A_{N}\}$, $\tilde{B}={\rm{blockdiag}}\{B_{1},\ldots,B_{N}\}$, $\tilde{C}={\rm{blockdiag}}\{C_{1},\ldots,C_{N}\}$,  $\tilde{E}=[E_{1}^\top,\ldots,E_{N}^\top]^\top$, and $\tilde F=\Lambda 1_N\otimes F$.
For the entire network, we can define an augmented system as follows
\begin{subequations}\label{eq:mas:close:x}
\begin{align}
		\dot x  &=  \tilde Ax +\tilde B u + \tilde E v \\
		\dot v  &= S v \\
		e_v  &= (H\otimes I_p)\tilde{C}x +\tilde F v \\
  \dot {z}  &= (I_N\otimes G_1)z + (I_N\otimes G_2)e_{v} .
\end{align}
\end{subequations}

Let $\xi :={\rm col}(x,z)\in \mathbb{R}^{N n_{\xi i}}$ with $n_{\xi i}=n_{xi}+n_{z}$. Define 
$\tilde K_x={\rm{blockdiag}}\{K_{x1},\ldots,K_{xN}\}$ and
$\tilde K_z={\rm{blockdiag}}\{K_{z1},\ldots,K_{zN}\}$.
For the distributed control protocol \eqref{eq:controller}, the closed-loop system is given by
\begin{subequations}\label{eq:mas:close}
    \begin{align}
    	\dot \xi  &=  \tilde A_{\xi}  \xi + \tilde E_{\xi}v  \label{eq:mas:close1}\\
		\dot v  &= S v \\
		e_v  &= (H\otimes I_p)\tilde{C}x +\tilde F v
  \end{align}
\end{subequations}
where 
\small
\begin{equation*}
	 \tilde A_{\xi}=\left[\begin{matrix}
		\tilde A+(H\otimes I_n)\tilde B\tilde K_x & \tilde B\tilde K_z\\
		(H\otimes G_2)\tilde C &I_N\otimes G_1
	\end{matrix}
	\right],  \bar E_{\xi}=\left[\begin{matrix}
		\tilde E\\
		 (I_N\otimes G_2)\tilde F
	\end{matrix}
	\right].
\end{equation*}
\normalsize

\begin{problem}
\label{pro:coop:linear}{\textbf{\textup{(Linear cooperative output regulation)}}}
Given the MAS \eqref{eq:mas}, the exosystem \eqref{eq:mas:exo}, and a diagraph $\bar{\mathcal{G}}$, design a distributed control law of the form \eqref{eq:controller} such that \eqref{eq:mas:goal} holds for all $x_i(0)$, $v(0)$, and $i=1,2,\ldots,N$.
\end{problem}

It has been shown in \cite[Lemma 1]{su2012cooperative} that under Assumption~\ref{as:graph}, $\lim_{t\rightarrow\infty}e_{i}(t) =0$ if and only if $\lim_{t\rightarrow\infty}e_{vi}(t) =0$.
In line with this, the distributed control protocol \eqref{eq:controller} will solve the linear cooperative output regulation problem (cf. Problem \ref{pro:coop:linear}), if the gain matrix $K_{\xi}:={\rm blockdiag}\{K_{\xi 1},\ldots,K_{\xi N}\}$ with each element $K_{\xi i}:=[K_{xi}, K_{zi}]$ is designed such that the closed-loop system \eqref{eq:mas:close} is asymptotically stable, i.e., the matrix $\tilde{A_\xi}$ is Hurwitz stable.
However, complexities surge when the system matrices $(A_i,B_i,C_i,E_i)$ are unknown.
The challenge we face is to address Problem~\ref{pro:coop:linear} directly from  data.

In this pursuit, similar to Section \ref{sec:single:linear:data}, we collect a set of data $\mathbb{D}_i := \{(u_i(\tau),{\xi}_i(\tau),\dot{\xi}_i(\tau)\}_{\tau = 0}^{T- 1}$ through an offline experiment on the perturbed system
\begin{subequations}\label{eq:mas:offline}
   \begin{align}
    \label{eq:mas:data}
    \dot{\xi}_{i} &= {A}_{\xi i} \xi_{i} +{B}_{\xi i} u_{i} +E_{\xi i}v +d_i \\
  \dot v & = S  v 
\end{align} 
\end{subequations}
where $\xi_i ={\rm col}(x_i,z_i)$, $A_{\xi i}=\left[\begin{smallmatrix}
		A_i & 0 \\
		G_2 C_i & G_1
	\end{smallmatrix}
	\right]$, $B_{\xi i}=\left[\begin{smallmatrix}
		B_i\\
		0 
	\end{smallmatrix}
	\right]$, $E_{\xi i}=\left[\begin{smallmatrix}
		E_i\\
		 G_2F
	\end{smallmatrix}
	\right]$,
and $d_i \in\mathbb{R}^{n_{\xi i}}$ represents unknown disturbance during data collection.

To store the collected data, we define the following matrices per agent
\begin{subequations}\label{eq:mas:datamatrix}
  \begin{align}
	U_{i-}& := \left[\begin{matrix}u_i(0) & u_i(1) & \cdots & u_i(T-1)\end{matrix}\right]\\
	\Xi_{i-}& :=\left[\begin{matrix}
		\xi_i(0) &\xi_i(1) & \cdots& \xi_i(T-1)	
	\end{matrix}\right]\\
	\Xi_{i+} &:=\left[\begin{matrix}
		\dot \xi_i(0) &\dot \xi_i(1) & \cdots& \dot \xi_i(T-1)
	\end{matrix}\right].
\end{align}  
\end{subequations}

Note that these matrices are related by the equation
\begin{equation}\label{eq:relation}
    \Xi_{i+}=	A_{\xi i}	\Xi_{i-}+ 	B_{\xi i}U_{i-}+E_{\xi i}V+D_i
\end{equation}
where $V:=[v(0) \; v(1) \; \cdots \; v(T-1)]$ and $D_i:=[d_i(0) \; d_i(1) \; \cdots \; d_i(T-1)]$ are unknown matrices of agent $i$. To further the design and analysis, we impose some requirements on the data, introducing the following assumptions.

\begin{assumption}\label{as:rank}
    For all $i=1,2,\ldots,N$, the data matrices satisfy 
    ${\rm rank}\!\left(\left[\begin{smallmatrix}
		U_{i-}\\
		\Xi_{i-}
	\end{smallmatrix}\right]\right) = n_{\xi i}$. 
\end{assumption}

\begin{assumption}\label{as:noise}
The sequences $V$ and $D_i$ are bounded, i.e., there exists
    \begin{equation}
	\label{lift_noise}
	(E_{\xi i} V + D_i)(E_{\xi i}  V + D_i)^\top \preceq \Delta_i \Delta_i^\top
    \end{equation}
for some matrix $\Delta_i$ and $i=1,2,\ldots,N$.
\end{assumption}

The set of all system matrices consistent with the data is defined as
\small
\begin{equation}\label{eq:mas:elip}
    \mathcal{C}_i:=\left\{\bar \Phi_i^\top=[\bar A_{\xi i}\;\bar B_{\xi i}]:
\Sigma_i + \Upsilon_i^\top \Phi_i + \Phi_i^\top \Upsilon_i + \Phi_i^\top \Psi_i \Phi_i \preceq 0\right\}
\end{equation} 
\normalsize
where $\Psi_i := \left[\begin{smallmatrix}
		\Xi_{i-}\\
		 U_{i-}
	\end{smallmatrix}\right]\left[\begin{smallmatrix}
	\Xi_{i-}\\
	U_{i-}
	\end{smallmatrix}\right]^\top$, $\Upsilon_i := - \left[\begin{smallmatrix}
	\Xi_{i-}\\
	U_{i-}
	\end{smallmatrix}\right]\Xi_{i+}^\top$, and $\Sigma_i := \Xi_{i+} \Xi_{i+}^\top - \Delta_i \Delta_i^\top$ for $i=1,2,\ldots,N$.

Therefore, the proposed distributed control protocol  \eqref{eq:controller}, which solves Problem \ref{pro:coop:linear} without the knowledge of system models, is established by the following theorem.

\begin{theorem}\label{thm:mas:linear}
    Consider the MAS \eqref{eq:mas}, the exosystem \eqref{eq:mas:exo}, and the graph $\bar{\mathcal{G}}$ under Assumptions~\ref{as:exo:lin} and \ref{as:graph}. 
For data $U_{i-}$, $\Xi_{i-}$, $\Xi_{i+}$ in \eqref{eq:mas:datamatrix} satisfying Assumptions \ref{as:rank} and \ref{as:noise} and $\Psi_i$, $\Upsilon_i$, $\Sigma_i$ in \eqref{eq:mas:elip}, 
 if the LMIs in \eqref{eq:mas:sdp1} are feasible for some matrices $P_i \succ 0$ and $Y_i$, $i=1,2,\ldots,N$, then the matrix $K_{\xi}$ with $K_{\xi i}:= Y_iP_i^{-1}/\lambda_i$ renders $\tilde{A}_\xi$ Hurwitz stable.
Furthermore, the distributed control protocol \eqref{eq:controller} solves Problem \ref{pro:coop:linear}.
	\begin{equation}\label{eq:mas:sdp1}
\left[\begin{matrix}
	- \Sigma_i &  \Upsilon_i^\top - \left[\begin{matrix}
		P_i\\
		Y_i
	\end{matrix}\right]^\top\\
	\star &  -\Psi_i
\end{matrix}\right]\prec 0.
	\end{equation} 
\end{theorem}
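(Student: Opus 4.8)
The plan is to split the claim into its two assertions: (i) the gain $K_\xi$ with $K_{\xi i}=Y_iP_i^{-1}/\lambda_i$ makes the aggregate matrix $\tilde A_\xi$ Hurwitz, and (ii) this in turn solves Problem \ref{pro:coop:linear}. Assertion (ii) is the easy half, and I would dispatch it exactly as in the discussion preceding the theorem: \cite[Lemma 1]{su2012cooperative} gives, under Assumption \ref{as:graph}, that $\lim_{t\to\infty}e_i(t)=0$ for every agent if and only if $\lim_{t\to\infty}e_{vi}(t)=0$; and, reproducing the internal-model argument of Theorem \ref{thm:LTI} (itself resting on \cite[Chapter~1.3]{huang2004nonlinear}) at the network level, once $\tilde A_\xi$ is Hurwitz the distributed internal model \eqref{eq:controller:z} drives $e_v\to 0$ for arbitrary $x_i(0)$ and $v(0)$. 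Everything therefore reduces to the Hurwitz claim, which is where the real work lies.

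To prove that $\tilde A_\xi$ is Hurwitz I would work through the spectrum of $H$. Under Assumption \ref{as:graph} it is standard (see \cite{su2012cooperative}) that $H=\mathcal L+\Lambda$ is nonsingular and that every eigenvalue $\lambda_i$ lies in the open right half-plane, so each $1/\lambda_i$ is well defined. Since the inter-agent coupling enters $\tilde A_\xi$ only through $H$ (via $(H\otimes G_2)\tilde C$ and the corresponding state-feedback term), I would bring $H$ to upper-triangular real Schur or Jordan form $U^{-1}HU=J$ with $J_{ii}=\lambda_i$ and lift this similarity through the Kronecker factors, rendering $\tilde A_\xi$ block upper triangular. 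Its eigenvalues are then those of the diagonal blocks \[ M_i=\left[\begin{smallmatrix} A_i+\lambda_i B_iK_{xi} & B_iK_{zi}\\ \lambda_i G_2C_i & G_1\end{smallmatrix}\right],\qquad i=1,\dots,N. \]

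The gain scaling is engineered to cancel these $\lambda_i$. Conjugating $M_i$ by the similarity ${\rm diag}(I,\lambda_i I)$ transfers the factor $\lambda_i$ from the internal-model coupling onto the input channel, yielding $\left[\begin{smallmatrix} A_i+\lambda_i B_iK_{xi} & \lambda_i B_iK_{zi}\\ G_2C_i & G_1\end{smallmatrix}\right]$; substituting $[K_{xi}\ K_{zi}]=Y_iP_i^{-1}/\lambda_i$ then collapses this to $A_{\xi i}+B_{\xi i}(Y_iP_i^{-1})$, which is precisely the single-agent closed loop of Theorem \ref{thm:LTI}. Feasibility of the per-agent LMI \eqref{eq:mas:sdp1} together with $[A_{\xi i}\ B_{\xi i}]\in\mathcal C_i$ from \eqref{eq:mas:elip} then guarantees, through Theorem \ref{thm:peter}, that $A_{\xi i}+B_{\xi i}Y_iP_i^{-1}$ is Hurwitz for every $i$; hence each $M_i$, and therefore $\tilde A_\xi$, is Hurwitz.

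The step I expect to be the genuine obstacle is the block-triangularization for heterogeneous agents. Because $\tilde A$, $\tilde B\tilde K_x$ and $\tilde C$ are block-diagonal with distinct (indeed differently sized) blocks, they do not commute with $U\otimes I$, so a naive modal transform scrambles the individual $A_i$ instead of placing them on the diagonal; one must argue instead that triangularizing $H$ leaves each agent's own $(A_i,B_i,C_i)$ on the diagonal while pushing the coupling strictly above it. I would also have to accommodate complex or defective $\lambda_i$, where the scaling ${\rm diag}(I,\lambda_i I)$ becomes complex — still legitimate for an eigenvalue test — and where repeated eigenvalues force the Jordan rather than diagonal form. Carrying out this decoupling rigorously, rather than the per-mode algebra, is the crux of the argument.
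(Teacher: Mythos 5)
Your proposal is correct and follows essentially the same route as the paper's proof: triangularize/diagonalize $H$ (the paper's $T_1$), lift the similarity through the Kronecker structure and a permutation ($T_2$, $T_3$) so that $\tilde{A}_\xi$ becomes block triangular with diagonal blocks $\left[\begin{smallmatrix} A_i+\lambda_i B_iK_{xi} & B_iK_{zi}\\ \lambda_i G_2C_i & G_1\end{smallmatrix}\right]$, rescale by $T_{4i}={\rm diag}(I,\lambda_i^{-1}I)$ so that each block collapses to $A_{\xi i}+B_{\xi i}Y_iP_i^{-1}$, certify these Hurwitz via the per-agent LMI \eqref{eq:mas:sdp1} and Theorem \ref{thm:peter}, and conclude regulation via \cite[Lemma 1]{su2012cooperative} and \cite[Lemma 1.27]{huang2004nonlinear}. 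The heterogeneity obstacle you flag as the crux---that the block-diagonal $\tilde{A}$, $\tilde{B}\tilde{K}_x$, $\tilde{C}$ with distinct blocks do not commute with $T_1\otimes I_n$---is genuine, but the paper's own proof glosses over it in exactly the same way (and is in fact looser than you, asserting a \emph{unitary diagonalization} $J_H=T_1HT_1^{-1}$ and simply stating that $T_2\tilde{A}_\xi T_2^{-1}$ has the claimed form), so your attempt matches the paper's argument step for step.
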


\begin{proof}
Under Assumption~\ref{as:graph}, all the eigenvalues $\lambda_i$ for $i=1,2,\ldots,N$ have positive real parts.
We represent $T_1\in\mathbb{R}^{N\times N}$ as a unitary matrix such that $J_{{H}}=T_1{H}T_1^{-1}{=\rm diag}\{\lambda_1,\lambda_2,\ldots,\lambda_N\}$.
Let $T_2=\left[\begin{smallmatrix}
	T_1\otimes I_n & 0 \\
	 0  &T_1\otimes I_{n}
\end{smallmatrix}
\right]$. 
Then, we have that
$$\hat A_{\xi}:=T_2\tilde A_{\xi}T_2^{-1}=\left[\begin{matrix}
	\tilde A+(J_{{H}}\otimes I_n)\tilde B\tilde K_x & \tilde B\tilde K_z\\
(J_{{H}}\otimes G_2)\tilde C &I_N\otimes G_1
\end{matrix}
\right].$$
Furthermore, let $T_3={\rm col}(i_1,i_{N+1},i_2,i_{N+2},\ldots,i_N,i_{2N})$, where $i_k$ is the $k$th row of $I_{2N}$.
It deduces that $\check A_{\xi}:=(T_3\otimes I_n)\hat A_{\xi}(T_3^{-1}\otimes I_n)$ is a lower block triangular matrix whose diagonal blocks are
$$\check A_{\xi i}:=\left[\begin{matrix}
 A_i+\lambda_iB_iK_{xi} &  B_iK_{zi}\\
	\lambda_iG_2C_i &G_1
\end{matrix}
\right] \text{for}~i=1,2,\ldots,N.$$
Let $T_{4i}=\left[\begin{smallmatrix}
	I_n &  0 \\
	0  &\lambda_i^{-1}I_{n}
\end{smallmatrix}
\right]$.
We obtain that
$$\underline {A}_{\xi i}:=T_{4i}\check A_{\xi i}T_{4i}^{-1} =\left[\begin{matrix}
	A_i+\lambda_iB_iK_{xi} &  	\lambda_iB_iK_{zi}\\
G_2C_i &G_1
\end{matrix}
\right].$$
Thus, we conclude that $\tilde A_{\xi}$ is Hurwitz stable if and only if, for all $i=1,2,\ldots,N$, $\check A_{\xi i}$ and hence $\underline{A}_{\xi i}$ are Hurwitz.

Considering the aforementioned,  similar to Theorem~\ref{thm:LTI}, the feasibility of \eqref{eq:mas:sdp1} implies the resultant gain matrices $K_{\xi i}$ satisfy
\begin{equation}
   (\bar A_{\xi i} + \bar B_{\xi i}\mathcal{K}_i)P_i+P_i(\bar A_{\xi i} + \bar B_{\xi i}\mathcal{K}_i)^\top\prec 0
\end{equation}
for all $(\bar A_{\xi i}, \bar B_{\xi i})\in\mathcal{C}_i$. 
Since the true matrices $(A_{\xi i}, B_{\xi i})\in \mathcal{C}_i$, we deduce that $ A_{\xi i} + \lambda_iB_{\xi i}{K}_{\xi i}$ are Hurwitz stable for all $i=1,2,\ldots,N$, 
which implies $\tilde{A}_{\xi}$ is Hurwitz stable.

Next, it follows from \cite[Lemma 1.27]{huang2004nonlinear} that there exists a unique $\tilde X_{\xi}=[\Pi_1^\top,\cdots,\Pi_N^\top,\Gamma_1^\top,\cdots,\Gamma_N^\top]^\top$ that satisfies
\begin{subequations}
    \begin{align}
    \tilde X_{\xi}S&=\tilde A_{\xi}\tilde X_{\xi}+\tilde E_{\xi}\\
    0&=\tilde C_{\xi}\tilde X_{\xi}+F
\end{align}
where $\tilde C_{\xi}:=[\tilde C\quad  {0}]$.
\end{subequations}
This completes the proof.
\end{proof}

\begin{remark}[Comparison]
The data-driven output synchronization problem, which is a special case of the cooperative output regulation problem (corresponding to 
$E=0$), has been studied in \cite{Jiao2021} and \cite{li2024poly}. 
Compared with these existing results, the proposed method has the following advantages.
\begin{itemize}
    \item [1)]  
    \emph{Less conservative assumptions on noise.} The approach presented in \cite{Jiao2021} requires that the process noise during offline data collection is measurable and perfectly known, which is impractical in real-world applications. In contrast, our proposed method only assumes the noise is bounded, i.e., satisfying Assumption \ref{as:noise}.
    \item [2)] 
    \emph{Zero tracking error.} The approach presented in \cite{li2024poly} only establishes the ultimately uniformly bounded tracking error due to the infeasibility of the OREs. In contrast, our proposed dynamic control law avoids solving OREs and achieves zero tracking error, i.e., 
    $\lim_{t\rightarrow\infty}e_i(t)=0$ for $i=1,2,\ldots,N$.
    \item[3)]
    \emph{Addressing the nonlinearity.} The works \cite{lopez2024,Jiao2021,li2024poly}  only consider the linear case, whereas our proposed method can deal with nonlinear systems, which allows for a wider range of application scenarios.
\end{itemize}

\end{remark}

We illustrate the aforementioned results via a numerical example.

\emph{Example 3:} Consider a MAS consisting of four robot systems modeled in Section~\ref{sec:single:linear:exm}, treating each system as agent $i$ for $i=1,\ldots,4$.
The system matrices of each agent are given by
\begin{align*}
    &A_i = \left[\begin{matrix}
        0 & 1\\
        1-i & 2-i
    \end{matrix}\right], B_i = B,C_i= C, E_i = \left[\begin{matrix}
            0 & 0 & 0 & 0\\
            i & 0 & 0 & 0
        \end{matrix}\right].
\end{align*}
The matrices $B$ and $C$, as well as the exosystem, are chosen to be the same as those in Section \ref{sec:single:linear:exm}. 
The information exchange among all agents is described by the digraph $\bar{\mathcal{G}}$ in Fig.~\ref{fig:graph}, where the node $0$ represents the exosystem $v(t)$. It is evident that Assumption~\ref{as:graph} holds.

\begin{figure}[!htb]
	\centering
	\includegraphics[width = 5cm]{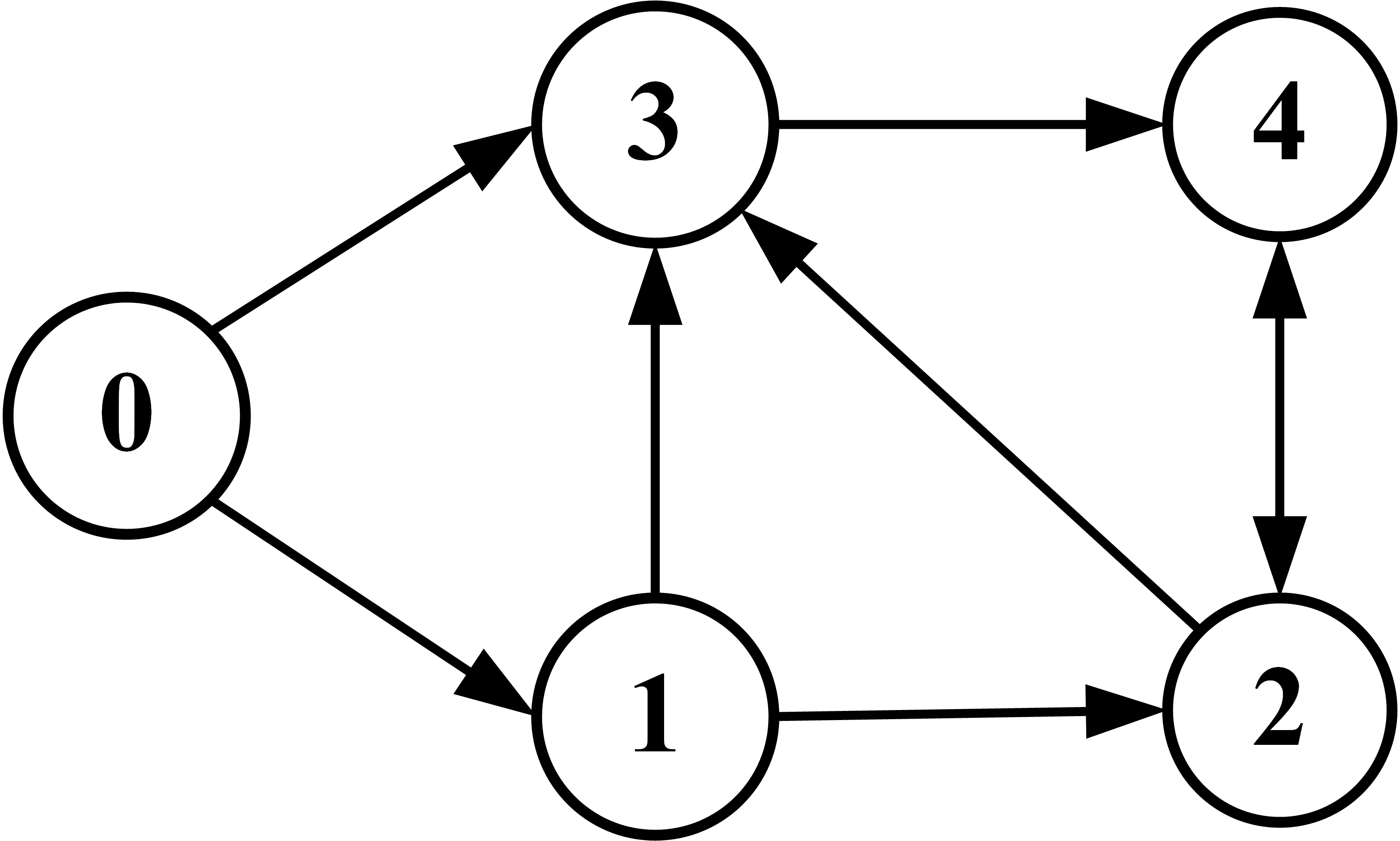}\\
	\caption{The communication graph $\bar{\mathcal{G}}$ between the $4$ agents and the exosystem}.\label{fig:graph}
	\centering
\end{figure}

We collected each agent's trajectories of length $T = 20$ from the perturbed system \eqref{eq:mas:offline} under the same conditions  as in Example 1, assuming that the noise $d_i$ is bounded by $[-0.01,0.01]$ for all $i=1,\dots,4$.
By solving the LMIs \eqref{eq:mas:sdp1} in Theorem~\ref{thm:mas:linear}, we obtain that the controller gain $K_{\xi i}$ for each agent.
Fig. \ref{fig:mas:lin} illustrates the tracking performance of the MAS under the proposed distributed data-driven control protocol \eqref{eq:controller}. We observe that the output of each agent asymptotically tracks the output of the exosystem, indicating that the cooperative output regulation problem is successfully solved.

In addition, Fig. \ref{fig:mas:linpoly} compares the tracking error $e_i(t)$ under the proposed method and the polytopic method in \cite{li2024poly}.
It is evident that the proposed method achieves exact tracking, whereas the polytopic method only provides bounded tracking, thereby demonstrating the superiority of the proposed approach.

\begin{figure}[!htb]
	\centering
	\includegraphics[width = 9.5cm]{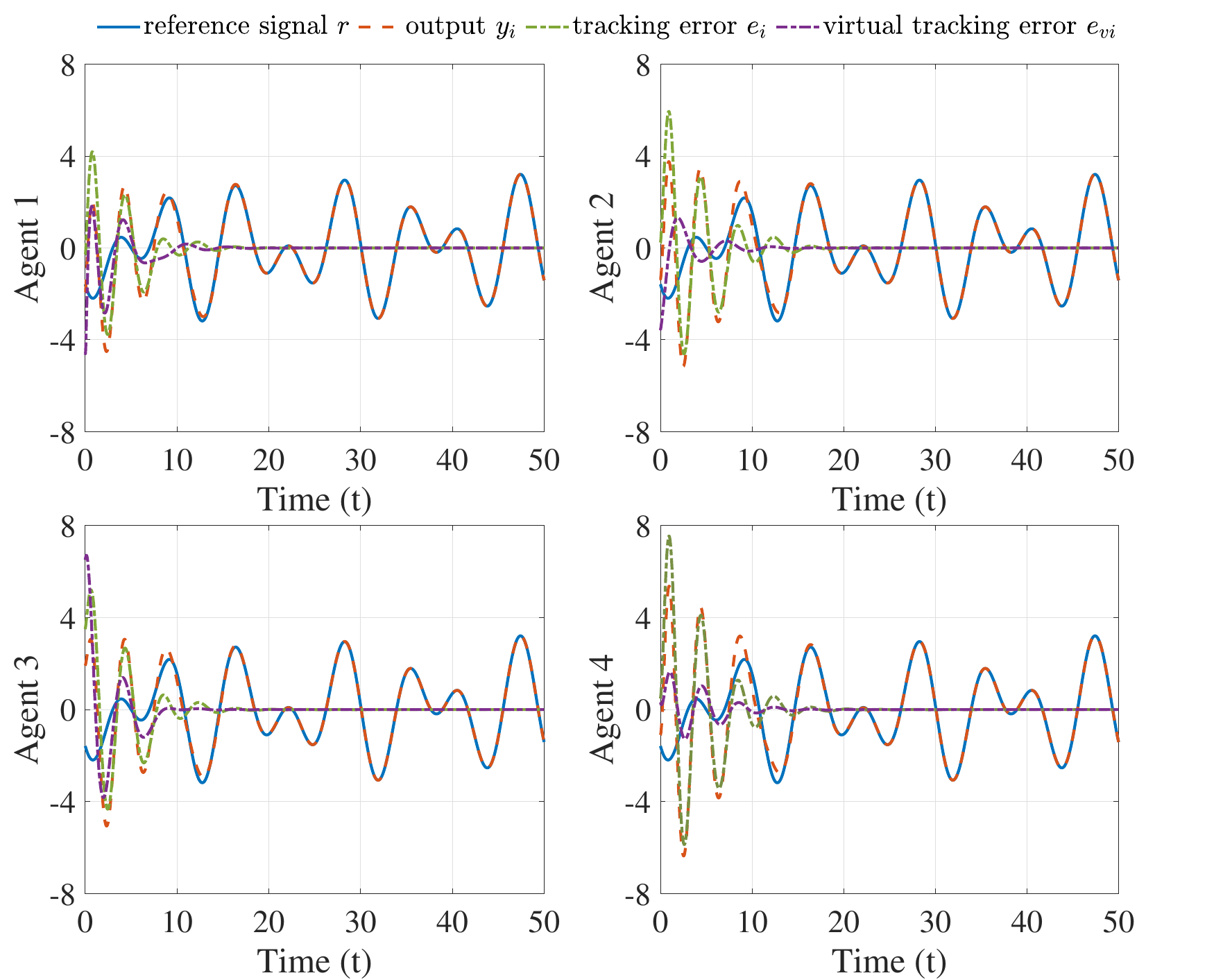}\\
\caption{Tracking performance under the proposed data-driven control approach.}\label{fig:mas:lin}
	\centering
\end{figure}

\begin{figure}[!htb]
	\centering
	\includegraphics[width = 9.3cm]{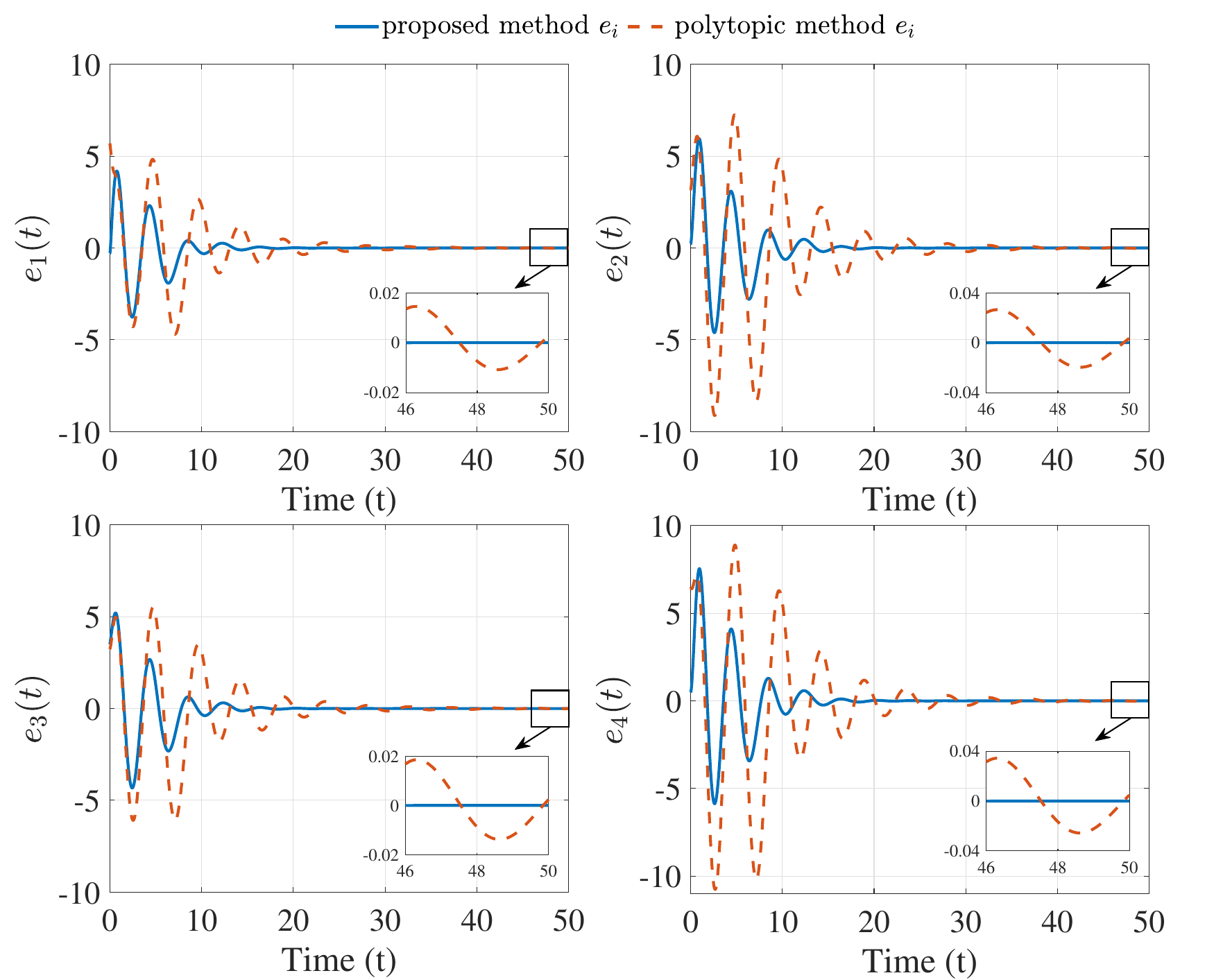}\\
	\caption{Tracking error $e_i(t)$ under the proposed control approach and polytopic control approach \cite{li2024poly}.}\label{fig:mas:linpoly}
	\centering
\end{figure}

\subsection{Nonlinear multi-agent systems}
Consider a continuous-time nonlinear MAS composed of $N$ heterogeneous agents indexed by $1,2,\ldots,N$ and an exosystem indexed by $0$, interacting via a communication network described by a topology $\bar{\mathcal{G}}$. The dynamics of the $i$th agent are modeled by
\begin{subequations} \label{eq:nonmas}
    \begin{align}
     \dot{x}_i  &=f_i(x_i  ,u_i  ,v  ),\\
     y_i   &=c_i(x_i  )
    \end{align}
\end{subequations}
where $f_i$ and $c_i$ are unknown nonlinear functions.
We assume that all the functions in \eqref{eq:nonmas} are sufficiently smooth, satisfying $f_i(0,0,0)=0$ and $c_i(0)=0$.
The exosystem is given by \eqref{eq:mas:exo} and satisfies Assumption~\ref{as:exo:non}.
 
Define the tracking error between the each agent $i$ and the exosystem by
\begin{equation}
    e_i  :=h_i(x_i  ,v  )=y_i  -y_0  .
\end{equation}

Let us rewrite the nonlinear MAS \eqref{eq:nonmas} and the exosystem \eqref{eq:mas:exo} as follows
\begin{subequations}\label{eq:nonmas:linearlized}
    \begin{align}
\dot{x}_i  &=A_ix_i  +B_iu_i  +E_iv  +\alpha_i(x_i,u_i,v)\\
        y_i  &=C_ix_i  +\mu_{i}(x_i)\\
        v  &=Sv  \\
        y_0  &=-Fv
    \end{align}
\end{subequations}
where the functions $\alpha_i(x_i,u_i,v)$ and $\mu_{i}(x_i)$ are the higher-order remainders, and the matrices $A_i$, $B_i$, $C_i$, $E_i$, and $F$ are  defined as
\begin{align*}
    A_i&:=\frac{\partial f_i}{\partial x_i}(0,0,0), ~B_i:=\frac{\partial f_i}{\partial u_i}(0,0,0)\\ C_i&:=\frac{\partial c_i}{\partial x_i}(0),~E_i:=\frac{\partial f_i}{\partial v}(0,0,0).
\end{align*}

Following the results in Section~\ref{sec:single:non}, we implement a linear distributed state feedback control protocol incorporating an internal model for each agent $i=1,2,\ldots,N$ of the form \eqref{eq:controller}.
Formally, we state the problem to be addressed below.

\begin{problem} 
\label{pro:coop:non}{\textbf{\textup{($k$th-order nonlinear cooperative output regulation)}}}
For the nonlinear MAS \eqref{eq:mas}, the exosystem \eqref{eq:mas:exo}, as well as the diagraph $\bar{\mathcal{G}}$, design a distributed control law of the form \eqref{eq:controller} such that for all sufficiently small $x_i(0)$ and $v(0)$, the tracking errors $e_i$ satisfy
\begin{align}
\lim_{t \rightarrow \infty}& (e_i(t)   - O(v^{(k+1)}(t))\nonumber\\
&= \lim_{t \rightarrow \infty} (h_i(x_i(t),v(t)) - O(v^{(k+1)}(t)) = 0
\end{align}
for all $i=1,2,\ldots,N$, where $O(v^{(k+1)})$ is characterized as \eqref{eq:property:o}.
\end{problem}

To solve Problem \ref{pro:coop:non}, we begin by considering the compact form of the multi-agent closed-loop system 
\begin{subequations}
	\begin{align}
		\dot x   &=  \tilde Ax  +\tilde B u  + \tilde E v  +\alpha(x,u,v)\\
		\dot v   &= S v  \\
		e_v   &= (H\otimes I_p)\tilde{C}x  +\tilde F v  +\gamma(x,v)\\
  \dot {z}   &= (I_N\otimes G_1)z  + (I_N\otimes G_2)e_{v}  
	\end{align}
\end{subequations}
where $\alpha= {\rm col}(\alpha_{1},\ldots,\alpha_{N})$ and $\gamma={\rm col}(\gamma_{1},\ldots,\gamma_{N})$. The remaining symbols are defined as in
\eqref{eq:mas:close:x}.
Similar to the linear case, letting $\xi  := {\rm col}(x,z )$, the resulting closed-loop composite system can be written as
\begin{subequations}\label{eq:nonmas:closexi}
    \begin{align}
       \dot \xi	& = {\left[
\begin{matrix}
    \tilde A & 0\\
    (H\otimes G_2) \tilde C & I_N\otimes G_1
\end{matrix}
\right]} \xi + {\left[
\begin{matrix}
   \tilde B\\
    0
\end{matrix}
\right]} u \nonumber\\
&\quad+ {\left[
\begin{matrix}
    \tilde E\\
    (I_N\otimes G_2 )\tilde F
\end{matrix}
\right]} v+ {\left[
\begin{matrix}
    \alpha(x,u,v) \\
    (I_N\otimes G_2) \gamma(x,v)
\end{matrix}
\right]}\nonumber\\
& = \tilde A_{\xi} \xi + \tilde E_\xi v + \psi(\xi,v)\\
	\dot{v} & = S v\\
	e_v   &= (H\otimes I_p)\tilde{C}x  +\tilde F v  +\gamma(x,v).
    \end{align}
\end{subequations}

Following the same step as in Section \ref{sec:single:non}, as long as the controller gain matrices $K_{\xi i}$ are designed such that $\tilde A_\xi$ is Hurwitz stable, Problem \ref{pro:coop:non} is solved.
Therefore, the objective here is to provide sufficient conditions for the design of $K_{\xi i}$ from data.
To this end, let $\mathbb{D}_i := \{(u_i(\tau),{\xi}_i(\tau),\dot{\xi}_i(\tau)\}_{\tau = 0}^{T-1}$
be the data resulting from an experiment carried out on the nonlinear system
\begin{subequations}\label{eq:nonmas:offline}
   \begin{align}
    \dot{\xi}_{i}  &= {A}_{\xi i} \xi_{i}  +{B}_{\xi i} u_{i}  +E_{\xi i}v  +\psi_{i}(\xi_i, v)\\
  \dot v  & = S  v  .
\end{align} 
\end{subequations}
Bearing in mind the analysis of Section \ref{sec:single:non:data},  we define the matrices of unknown exosignal $v  $ and approximating error $\psi_{i}(\xi_i, v)$ by 
\begin{align*}
	D_i &= \left[\begin{matrix}
		\psi_{i}(\xi_i(0),v(0)) & \cdots & \psi_{i}(\xi_i(T-1),v(T-1))
	\end{matrix}\right]\\
	V &= \left[\begin{matrix}
		{v}(0) & {v}(1) & \cdots & {v}(T-1)
	\end{matrix}\right].
\end{align*}
In addition, the data matrices $U_{i-}$, $\Xi_{i-}$, and $\Xi_{i+}$ are defined in \eqref{eq:mas:datamatrix} and they satisfy the identity
\begin{equation}
	\Xi_{i+} = A_{\xi i} \Xi_{i-} + B_{\xi i} U_{i-} + E_{\xi i} V + D_i.
\end{equation}

Assume that sequences $D_i$ and $V$ have bounded energy,  i.e., Assumption~\ref{as:noise} holds in this part as well.
Now, we are in the position to establish our main result using linear data-driven cooperative output regulation theory.

\begin{theorem}
     Consider the MAS \eqref{eq:nonmas}, the exosystem \eqref{eq:mas:exo}, and the graph $\bar{\mathcal{G}}$ under Assumptions ~\ref{as:exo:non} and \ref{as:graph}. 
 For data $U_{i-}$, $\Xi_{i-}$, $\Xi_{i+}$ in \eqref{eq:mas:datamatrix} satisfying Assumptions \ref{as:rank} and \ref{as:noise} and $\Psi_i$, $\Upsilon_i$, $\Sigma_i$ in \eqref{eq:mas:elip}, if the LMIs in \eqref{eq:mas:sdp1} are feasible for some matrices $P_i \succ 0$ and $Y_i$, $i=1,2,\ldots,N$, then the distributed control protocol \eqref{eq:controller} with matrix $K_{\xi i}:= Y_iP_i^{-1}/\lambda_i$ solves Problem \ref{pro:coop:non}.
\end{theorem}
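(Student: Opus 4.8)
The plan is to fuse the two ingredients already developed in the paper: the spectral decomposition of the network-coupled closed loop from the proof of Theorem~\ref{thm:mas:linear}, and the $k$-fold internal-model argument underlying Theorem~\ref{thm:non}. The starting observation is that the composite closed-loop system \eqref{eq:nonmas:closexi} has exactly the linear-plus-remainder form $\dot\xi = \tilde A_\xi \xi + \tilde E_\xi v + \psi(\xi,v)$, which is the network-level analogue of the single-agent system \eqref{eq:sys:non:aug}. Consequently, by the $k$th-order regulation theory of \cite[Theorem 3.12]{huang1994robust}, it suffices to certify that the composite drift matrix $\tilde A_\xi$ is Hurwitz stable; the remainder $\psi$ then contributes only the admissible $O(v^{(k+1)})$ residual demanded by Problem~\ref{pro:coop:non}.

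First I would certify the per-agent blocks from the data. The identity $\Xi_{i+} = A_{\xi i}\Xi_{i-} + B_{\xi i} U_{i-} + E_{\xi i} V + D_i$ holds with $D_i$ now collecting the higher-order remainders $\psi_i(\xi_i,v)$; under Assumptions~\ref{as:rank} and \ref{as:noise} this places the true linearization $(A_{\xi i},B_{\xi i})$ inside the data-consistent set $\mathcal{C}_i$ in \eqref{eq:mas:elip}. Invoking Theorem~\ref{thm:peter} exactly as in Theorem~\ref{thm:LTI}, feasibility of the $i$th LMI \eqref{eq:mas:sdp1} produces $\mathcal{K}_i := Y_iP_i^{-1}$ such that $(\bar A_{\xi i}+\bar B_{\xi i}\mathcal{K}_i)P_i + P_i(\bar A_{\xi i}+\bar B_{\xi i}\mathcal{K}_i)^\top \prec 0$ for every $(\bar A_{\xi i},\bar B_{\xi i})\in\mathcal{C}_i$. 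Specializing to the true pair and using $K_{\xi i}=\mathcal{K}_i/\lambda_i$ shows that $A_{\xi i}+\lambda_i B_{\xi i}K_{\xi i}=\underline A_{\xi i}$ is Hurwitz for each $i$.

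Next I would reproduce the similarity chain $T_2,T_3,T_{4i}$ from the proof of Theorem~\ref{thm:mas:linear}. Assumption~\ref{as:graph} guarantees that every eigenvalue $\lambda_i$ of $H=\mathcal{L}+\Lambda$ has positive real part, so the diagonalization $J_H = T_1 H T_1^{-1}$ and the subsequent block-triangularization are well defined; they reveal that $\tilde A_\xi$ is similar to a lower block-triangular matrix with diagonal blocks $\check A_{\xi i}$, each of which is in turn similar via $T_{4i}$ to $\underline A_{\xi i}$. Hence $\tilde A_\xi$ is Hurwitz if and only if all $\underline A_{\xi i}$ are, which Step~2 has just established. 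This part is identical to the linear case because the transformations act only on the network structure and not on the dynamics of the internal model.

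The main obstacle is the final transfer from composite-state stability to the distributed $k$th-order tracking guarantee. With $\tilde A_\xi$ Hurwitz and $(G_1,G_2)$ chosen as an $n_y$-copy internal model of the $k$-fold exosystem $\mathcal{S}_{kf}$ in \eqref{eq:S_kf}, I would apply \cite[Theorem 3.12]{huang1994robust} to the composite system \eqref{eq:nonmas:closexi} to obtain $\lim_{t\to\infty}(e_v(t)-O(v^{(k+1)}(t)))=0$ for the \emph{virtual} error vector $e_v$. The delicate step is converting this into the per-agent claim for the true errors $e_i$: here I would use the exact algebraic identity $e_v=(H\otimes I_p)\,{\rm col}(e_1,\ldots,e_N)$ together with the invertibility of $H$ (again from Assumption~\ref{as:graph}), the nonlinear analogue of \cite[Lemma 1]{su2012cooperative}, to conclude that $e_{vi}\to O(v^{(k+1)})$ forces $e_i\to O(v^{(k+1)})$ for all $i$. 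Verifying that the $O(v^{(k+1)})$ order is preserved under multiplication by the constant matrix $H^{-1}$, and that the smallness hypotheses on $x_i(0),v(0)$ propagate through the coupling so that Property~\ref{proper:nonstable} still holds network-wide, is the one place where the nonlinear, distributed nature genuinely goes beyond a direct citation of Theorems~\ref{thm:non} and \ref{thm:mas:linear}.
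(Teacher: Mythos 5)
Your proposal is correct and takes essentially the same approach as the paper, whose own proof of this theorem is just the one-line citation ``The proof is completed by Theorems~\ref{thm:non} and \ref{thm:mas:linear}.'' You have simply unpacked that citation---the per-agent LMI certification via the sets $\mathcal{C}_i$, the similarity chain $T_2,T_3,T_{4i}$ showing $\tilde A_\xi$ is Hurwitz, the application of \cite[Theorem 3.12]{huang1994robust} to the composite system \eqref{eq:nonmas:closexi}, and the conversion $e=(H^{-1}\otimes I_p)\,e_v$ from virtual to true errors under Assumption~\ref{as:graph}---which is faithful to, and more detailed than, the paper's argument.
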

\begin{proof}
    The proof is completed by Theorems~\ref{thm:non} and \ref{thm:mas:linear}.
\end{proof}

\emph{Example 4:} Consider a multi-agent system (MAS) consisting of four ball and beam systems as described in Section~\ref{sec:single:non:exm}, with each system treated as agent $i$ for $i=1,\ldots,4$.
In the heterogeneous case, the parameter $h_0$ is different for each agent, with values of $0.7134$, $0.75$, $0.7647$, and $0.7776$, respectively.
The communication graph $\bar{\mathcal{G}}$ is shown in Fig.~\ref{fig:graph}.

We collected trajectories of length 
 $T = 100$ for each agent from random initial conditions and random inputs uniformly generated from $[-0.1,0.1]$, the exosignal $v$ from $[-0.002,0.002]$, and  noise $d_i$ from $[-0.002,0.002]$ for all $i=1,\ldots,N$.
Under the proposed distributed control protocol \eqref{eq:controller}, Figs.~\ref{fig:mas:non1} and \ref{fig:mas:non2} depict the nonlinear tracking performance of each agent when implementing $1$st-order and $2$nd-order internal models, respectively.
Taking a $2$nd-order internal model as an example, we plot the local stability of the tracking error $e_i(t)$ under the proposed control protocol \eqref{eq:controller} in Fig.~\ref{fig:mas:non3}.
These figures demonstrate the effectiveness of the proposed method in addressing the cooperative output regulation problem of nonlinear MASs.

\begin{figure}[!htb]
	\centering
	\includegraphics[width = 9.5cm]{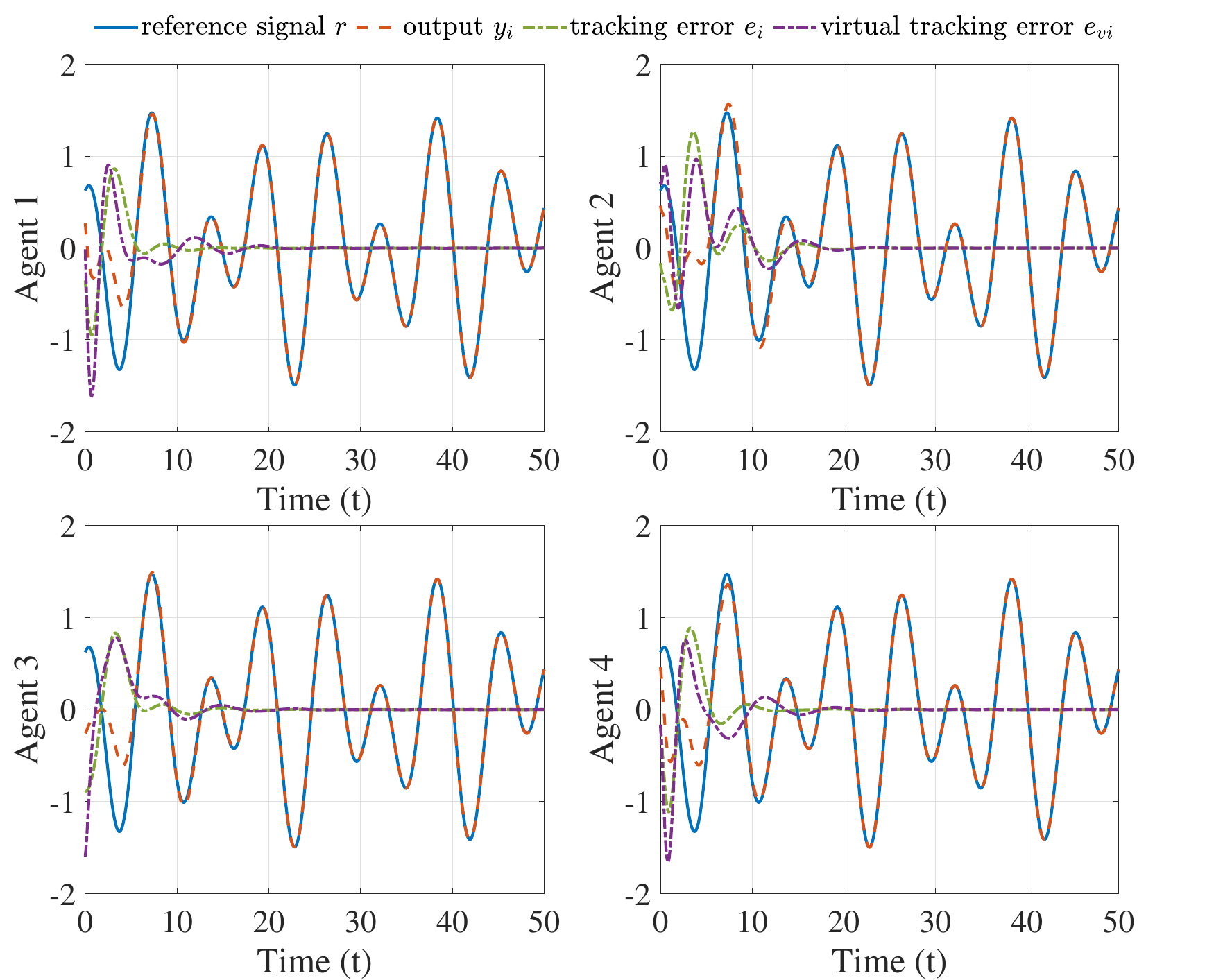}\\
	\caption{Tracking performance of the nonlinear MAS with $k=1$.}\label{fig:mas:non1}
	\centering
\end{figure}

\begin{figure}[!htb]
	\centering
	\includegraphics[width = 9.5cm]{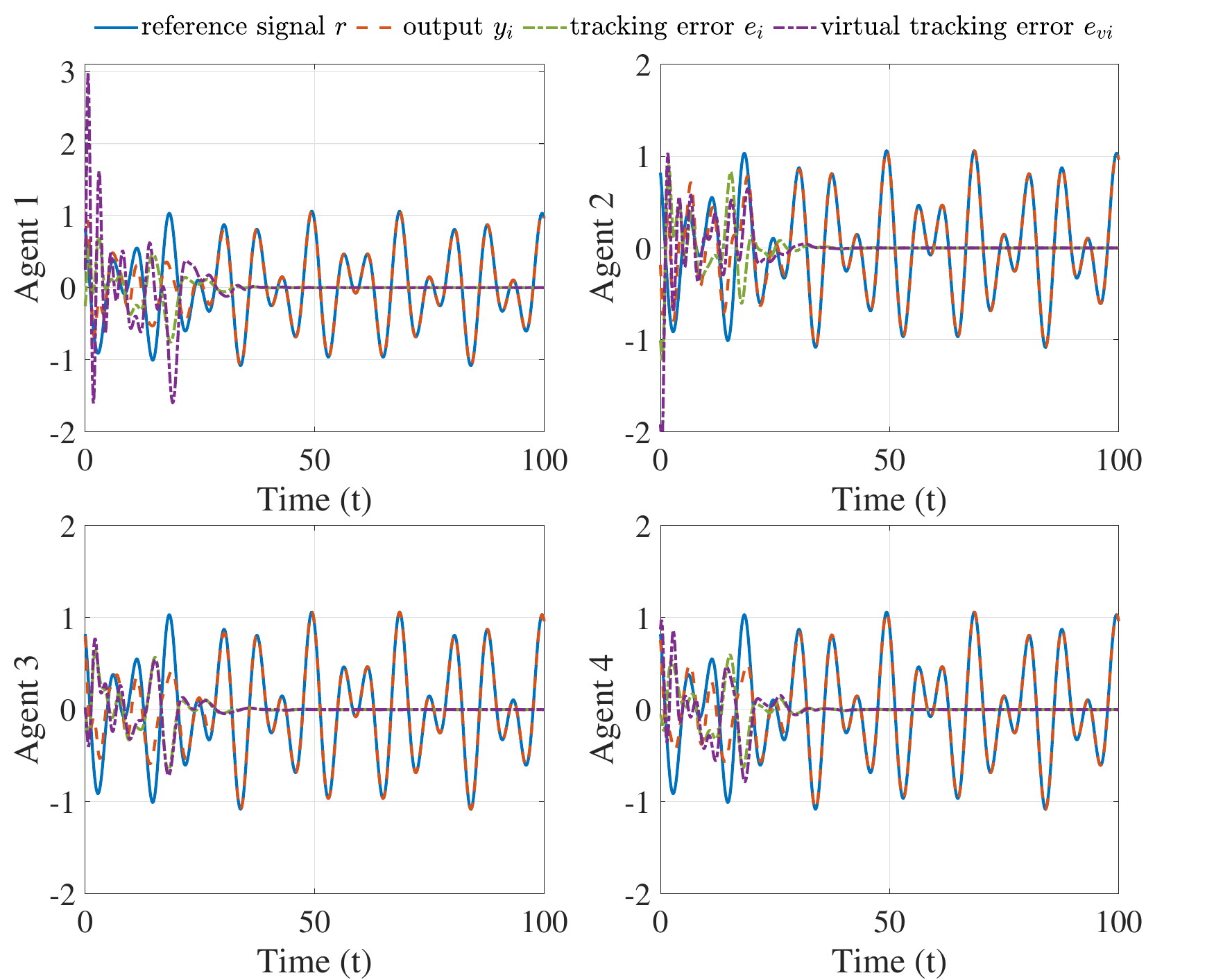}\\
	\caption{Tracking performance of the nonlinear MAS with $k=2$.}\label{fig:mas:non2}
	\centering
\end{figure}

\begin{figure}[!htb]
	\centering
	\includegraphics[width = 9cm]{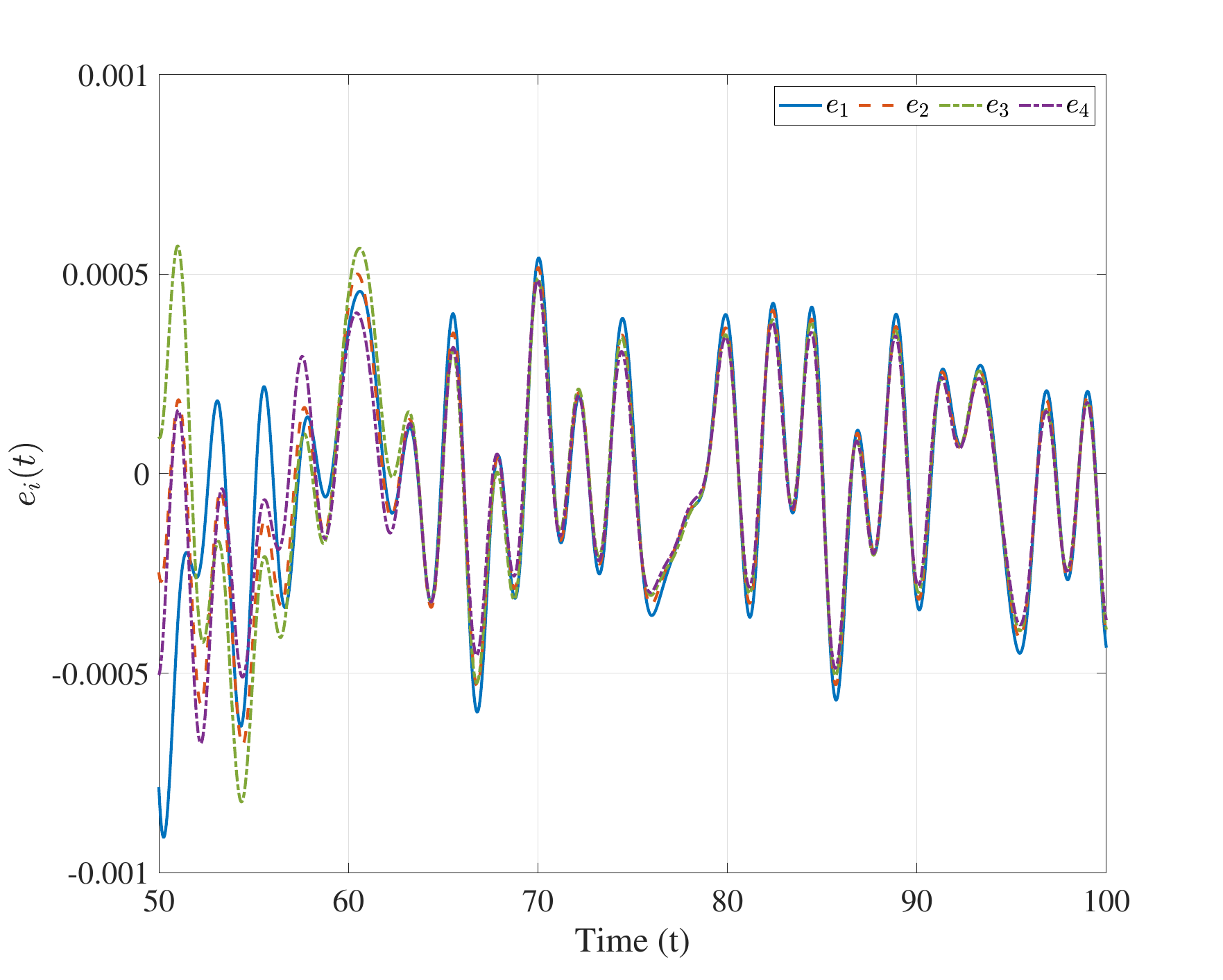}\\
	\caption{Tracking error of the nonlinear MAS with $k=2$.}\label{fig:mas:non3}
	\centering
\end{figure}

\section{Conclusion}\label{sec:conclusion}
This paper addressed the problem of output regulation for both unknown linear and nonlinear, single and multi-agent systems (MASs) using noisy data. Departing from the traditional approach of solving data-based output regulation equations, we proposed a data-driven internal model-based controller. This controller can be designed by solving a simple and low-complexity data-based linear matrix inequality (LMI). The proposed method is proven to be effective for nonlinear systems, achieving 
$k$th-order output regulation. Furthermore, the approach extends seamlessly to MASs. Numerical examples have demonstrated the efficacy and robustness of the proposed data-driven control strategy.

The results of this paper demonstrate that our approach is promising for nonlinear systems, though we have only scratched the surface of this research area. Exploring other methods for handling nonlinearity, as discussed in Remark \ref{rmk:nonctrl}, or considering other types of Lyapunov functions, e.g., polynomial Lyapunov functions in \cite{guo2022data}, to achieve less conservative results are all interesting directions for future work.

\bibliographystyle{IEEEtran}
\bibliography{ddOutReg}

\end{document}